 \newcommand{\lyxaddress}[1]{
   \par {\raggedright #1 
   \vspace{1.4em}
   \noindent\par}
 }
\newtheorem{theorem}{Theorem}
\newtheorem{lemma}[theorem]{Lemma}
\newtheorem{proposition}[theorem]{Proposition}
\newtheorem{corollary}[theorem]{Corollary}
\newtheorem{remark}[theorem]{Remark}
\newtheorem*{prop*}{Lemma}
\theoremstyle{definition}
\newtheorem{definition}[theorem]{Definition}
\newtheorem*{rem*}{Remark}
\newtheorem*{rems*}{Remarks}
\newcommand{\N}{\mathbb{N}}
\newcommand{\R}{\mathbb{R}}
\newcommand{\XX}{\mathcal{X}}
\newcommand{\YY}{\mathcal{Y}}
\newcommand{\UU}{\mathcal{U}}
\newcommand{\Hs}{\mathscr{H}}
\newcommand{\Bs}{\mathscr{B}}
\newcommand{\dd}{\mathrm{d}}
\newcommand{\const}{\mathrm{const}}
\newcommand{\im}{\mathrm{i}}
\newcommand{\Dom}{\mathop\mathrm{Dom}\nolimits}
\newcommand{\Ran}{\mathop\mathrm{Ran}\nolimits}
\newcommand{\diag}{\mathop\mathrm{diag}\nolimits}
\newcommand{\offdiag}{\mathop\mathrm{offdiag}\nolimits}
\newcommand{\ad}{\mathop\mathrm{ad}\nolimits}
\newcommand{\dist}{\mathop\mathrm{dist}\nolimits}
\newcommand{\ceil}[1]{\lceil#1\rceil}
\newcounter{point}
\begin{document}

\title{On the energy growth of some periodically driven quantum
  systems with shrinking gaps in the spectrum}

\author{Pierre Duclos$^{1}$, Ondra Lev$^{2}$, Pavel
  \v{S}\v{t}ov\'\i\v{c}ek$^{2}$}

\maketitle

\lyxaddress{$^{1}$ Centre de Physique Th\'eorique de Marseille UMR
  6207 - Unit\'e Mixte de Recherche du CNRS et des Universit\'es
  Aix-Marseille I, Aix-Marseille II et de l' Universit\'e du Sud
  Toulon-Var - Laboratoire affili\'e \`a la FRUMAM}

\lyxaddress{$^{2}$Department of Mathematics, Faculty of Nuclear
  Science, Czech Technical University, Trojanova 13, 120 00 Prague,
  Czech Republic}

\begin{abstract}\noindent
  We consider quantum Hamiltonians of the form $H(t)=H+V(t)$ where the
  spectrum of $H$ is semibounded and discrete, and the eigenvalues
  behave as $E_n\sim{}n^\alpha$, with $0<\alpha<1$. In particular, the
  gaps between successive eigenvalues decay as $n^{\alpha-1}$. $V(t)$
  is supposed to be periodic, bounded, continuously differentiable in
  the strong sense and such that the matrix entries with respect to
  the spectral decomposition of $H$ obey the estimate
  $\|V(t)_{m,n}\|\leq\varepsilon\,|m-n|^{-p}\max\{m,n\}^{-2\gamma}$
  for $m\neq{}n$ where $\varepsilon>0$, $p\geq1$ and
  $\gamma=(1-\alpha)/2$. We show that the energy diffusion exponent
  can be arbitrarily small provided $p$ is sufficiently large and
  $\varepsilon$ is small enough. More precisely, for any initial
  condition $\Psi\in\Dom(H^{1/2})$, the diffusion of energy is bounded
  from above as $\langle{}H\rangle_\Psi(t)=O(t^\sigma)$ where
  $\sigma=
  \alpha/(2\ceil{p-1}\gamma-\frac{1}{2})$.
  As an application we consider the Hamiltonian
  $H(t)=|p|^\alpha+\varepsilon{}v(\theta,t)$ on $L^2(S^1,\dd\theta)$
  which was discussed earlier in the literature by Howland.
  
\end{abstract}

\section{Introduction}

One of the basic questions one can ask about time-dependent quantum
systems is the growth of energy on a long time scale for a given
initial condition. Unfortunately the quantum dynamics in the
time-dependent case proved itself to be rather difficult to analyze in
its full generality and complexity. The systems which allow for at
least partially analytical treatment and whose dynamics has been
perhaps best studied from various points of view are either driven
harmonic oscillators \cite{BJLPN,HLS,C1,EV} or periodically kicked
quantum Hamiltonians \cite{C2,C3,DeBF,Bou,Bou2,McCMcK}. On a more
general level, it is widely believed that there exist close links
between long time behavior of a quantum system and its spectral
properties. For time-independent quantum systems such a relation is
manifested by the famous RAGE theorem, see \cite{RS} for a summary and
references to the original papers. In a modified form this theorem has
been extended to periodic and quasi-periodic quantum systems
\cite{EV,JL,dO}. In this case the relevant operator whose spectral
properties are of interest is the Floquet (monodromy) operator.
Naturally, much attention has been paid to the spectral analysis of
Floquet operators in some of the papers cited above, see also
\cite{ABCF} for more recent results. Let us mention that a refined
analysis of how the spectral properties determine the quantum dynamics
is now available, see for example \cite{GM,Cb} and other papers, but
here we are not directly concerned with this question.

Thus for periodically time-dependent systems one can distinguish as a
related problem the spectral analysis of the Floquet operator under
certain assumptions on the quantum Hamiltonian. Frequently one writes
the time-dependent Hamiltonian in the form $H(t)=H+V(t)$ while
imposing assumptions on the spectral properties of the unperturbed
part $H$ and requiring some sort of regularity from the perturbation
$V(t)$. For our purposes an approach is rather important which is
based on the adiabatic methods and which was initiated by Howland
\cite{H1,H2} and further extended in \cite{N1,J1}. An essential
property imposed on the unperturbed Hamiltonian in this case is the
discreteness of the spectrum with increasing gaps between successive
eigenvalues.
 
Under this hypothesis Nenciu in \cite{N2} was not only able to
strengthen the results due to Howland but he derived in addition an
upper bound on the diffusive growth of the energy having the form
$\const\,t^{a/n}$ where $a>0$ is given by the spectral properties of
$H$ and $n$ is the order of differentiability of $V(t)$.  Inspired by
this result on the energy growth, Joye in \cite{J2} considered another
class of time-dependent quantum Hamiltonians with rather mild
assumptions on the spectral properties of $H$ but, on the other hand,
assuming that the strength of the perturbation $V(t)$ is in some sense
small with respect to $H$. Moreover, as far as the energy diffusion is
discussed, the periodicity of $V(t)$ is required neither in \cite{N2}
nor in \cite{J2}.

It is worthwhile to mention that Howland in \cite{H3} succeeded to
treat also the case when the spectrum of $H$ is discrete but the gaps
between successive eigenvalues are decreasing. To achieve this goal he
restricted himself to certain classes of perturbations $V(t)$
characterized by the behavior of matrix entries with respect to the
eigen-basis of $H$. In particular, he discussed as an example the
following model: $H(t)=|p|^\alpha+v(\theta,t)$ in
$L^2(S^1,\dd{}\theta)$ where $0<\alpha<1$ and $v(\theta,t)$ is in
$C^\infty(S^1\times{}S^1)$. It seems to be natural to look in this
case, too, for a result parallel to that due to Nenciu \cite{N2} and
to attempt a derivation of a nontrivial bound on the diffusive growth
of energy.  But we are aware of only one contribution in this
direction made by Barbaroux and Joye \cite{BJ}; it is based on the
general scheme proposed in \cite{J2}.

In this paper we wish to complete or to strengthen the results from
\cite{BJ} while making use of some ideas from \cite{J2}. Thus we aim
to consider other classes of time-dependent Hamiltonians whose
unperturbed part $H$ has a discrete spectrum with decreasing gaps. In
particular, the derived results are applicable to the Howland's model
introduced in \cite{H3}. In more detail, we deal with a quantum system
described by the Hamiltonian $H(t):=H+V(t)$ acting on a separable
Hilbert space $\Hs$ and such that $H$ is semibounded and has a pure
point spectrum with the spectral decomposition
\begin{displaymath}
  H = \sum_{n\in\N} E_nP_n.
\end{displaymath}
Assume that the eigen-values $E_1<E_2<\dots$ obey the shrinking gap
condition
\begin{equation}
  \label{VGC}
  c_H\frac{|m-n|}{\max\{m,n\}^{2\gamma}}
  \leq |E_m-E_n| \leq C_H\frac{|m-n|}{\max\{m,n\}^{2\gamma}}
\end{equation}
for some $\gamma\in\,]0,\frac12[$ and strictly positive constants
$c_H$, $C_H$.  Notice that condition (\ref{VGC}) implies
$E_n\sim{}n^\alpha$ where $\alpha=1-2\gamma\in\,]0,1[\,$ (more
precisely, (\ref{VGC}) implies that the sequence $E_nn^{-\alpha}$ is
bounded both from below and from above by strictly positive constants
for all sufficiently large $n$). To simplify the discussion let us
assume, without loss of generality, that $H$ is strictly positive,
i.e., $E_1>0$.

The time-dependent perturbation $V(t)\in\Bs(\Hs)$ is supposed to be
$T$-periodic and $C^1$ in the strong sense. From the strong
differentiability it follows that the propagator $U(t,s)$ associated
to the Hamiltonian $H+V(t)$ exists and preserves the domain $\Dom(H)$
(see, e.g., \cite{K}).

Let us suppose that $V$ is small with respect to the norm
\begin{equation}
  \label{eq:normVpgamma}
  \Vert V\Vert_{p,\gamma} := \sup_{t\in [0,T]} \sup_{m,n\in\N}
  \langle m-n\rangle^p \max\{m,n\}^{2\gamma}\,\Vert V(t)_{m,n}\Vert,
\end{equation}
where $p>2$,
\begin{displaymath}
  \langle m-n\rangle := \max\{1,|m-n|\},
\end{displaymath}
and $\Vert{}V(t)_{m,n}\Vert$ denotes the norm of the operator
\begin{displaymath}
  V(t)_{m,n} := P_m V(t) P_n :\Ran P_n\rightarrow \Ran P_m.
\end{displaymath}
We claim that if, in addition, $\ceil{p-1}>1/{(2(1-\alpha))}$ then the
propagator $U(t,s)$ preserves the form domain $Q_H=\Dom(H^{1/2})$ and
for any $\Psi$ from $Q_H$ one can estimate the long-time behavior of
the energy expectation value by
\begin{equation}
  \label{eq:intro_sigma}
  \langle U(t,0)\Psi,HU(t,0)\Psi\rangle = O(t^\sigma),
  \textrm{~with~~}
  \sigma = \frac{2\alpha}{2\ceil{p-1}(1-\alpha)-1}
\end{equation}
(more details are given in Theorem~\ref{dyn_stab} below). Here
$\ceil{x}$ is standing for the ceiling of a real number $x$, i.e., the
smallest integer greater than or equal to $x$.

Provided that $[V(t),V(s)]=0$ for every $t,s$ and
$\int_0^TV(t)\,\dd{}t=0,$ the assumption
$\Vert{}V\Vert_{p,\gamma}\le\varepsilon$ can be replaced by
$\Vert{}V\Vert_{p+1,0}\le\varepsilon$, i.e.,
\begin{displaymath}
  \Vert P_m V(t)P_n\Vert \le
  \frac{\varepsilon}{\langle m-n\rangle^{p+1}}.
\end{displaymath}
The condition $[V(t),V(s)]=0$ is satisfied for example when $V(t)$ is
a potential (i.e., a multiplication operator by a function on a
certain $L^2$ space) or when the time dependence of $V(t)$ is
factorized, i.e., $V(t)=f(t)v$ where $f(t)$ is a real-valued
($T$-periodic and $C^1$) function and $v$ is a time-independent
operator on $\Hs$.

Let us stress that even though the energy diffusion exponent $\sigma$
in (\ref{eq:intro_sigma}) can be made arbitrarily small provided $p$
is sufficiently large our result is still far away from the situation
when one can prove the dynamical stability in the sense that the
energy remains bounded in time for any initial condition \cite{DeBF}.
The point is that the time-dependent perturbation $V(t)$ is supposed
to be sufficiently regular and small by requiring that
$\|V\|_{p,\gamma}<\varepsilon$ where not only the norm but also the
positive bound $\varepsilon$ depends on $p$ (see
Theorem~\ref{dyn_stab} below for a precise formulation). This plays a
role also in the analysis of the Howland's model in
Subsection~\ref{sec:howlands_model}. In this case,
$H=|p|^\alpha+\varepsilon{}v(\theta,t)$ and the exponent $\sigma$ in
(\ref{eq:howland_sigma}) tends to $0$ as the order of
differentiability of $v(\theta,t)$ in $\theta$, called $k$, tends to
infinity. However the coupling constant $\varepsilon$ is supposed to
be sufficiently small in dependence on $k$ and so one cannot claim
that $\sigma$ equals $0$ even if $v(\theta,t)$ is smooth in $\theta$.

On the other hand, to our knowledge, non-trivial examples of
time-dependent quantum models for which one can verify this strong
type of dynamical stability are rather rare. A periodically
time-dependent quadratic Hamiltonian represents such a model. It is
explicitly solvable and this is how one can verify the boundedness of
energy in the non-resonant case \cite{EV}. A broader class of
periodically time-dependent models is shown to be dynamically stable
for non-resonant values of frequencies with the aid of the KAM
(Kolmogorov-Arnold-Moser) type method in \cite{ADE}, see also
\cite{DSSV} for some additional discussion. In this connection let us
point out a recent example \cite{dOS} showing that the relationship
between the spectral properties of the Floquet operator and the
dynamical stability is not so transparent, and it may require a
considerable amount of efforts to understand it properly.

Let us compare the result of the current paper, as briefly described
above, to the results derived in \cite{J2} and \cite{BJ}. Paper
\cite{J2} focuses on the general scheme and is not so much concerned
with particular cases as that one we are going to deal with here.
Nevertheless a possible application to the Howland's classes of
perturbations is shortly discussed in Proposition~5.1 and Lemma~5.1.
The Howland's classes are determined by a norm which somewhat differs
from (\ref{eq:normVpgamma}), as explained in more detail in
Subsection~\ref{sec:gap-cond-modif}. But the difference is not so
essential to prevent a comparison. To simplify the discussion let us
assume that the eigenvalues of $H$ are simple and behave
asymptotically as $E_n\sim\const\,n^\alpha$, with $0<\alpha<1$. In the
particular case when $\|V\|_{p,\gamma}<\infty$ for some $p>1$ and
$\gamma=(1-\alpha)/2$ the bound on the energy diffusion exponent
derived in \cite{J2} equals $\alpha/(2\gamma-\frac{1}{2})$ provided
$\gamma>(1+\alpha)/4$, i.e., $\alpha<1/3$. Our bound
$\alpha/(2\ceil{p-1}\gamma-\frac{1}{2})$, valid for $0<\alpha<1$ and
provided $p>2$ and $\ceil{p-1}>1/(4\gamma)$, is achieved by making use
of the rapid decay of matrix entries of $V$ in the direction
perpendicular to the diagonal. It follows that we can make the growth
of the energy $\langle{}H\rangle_\Psi$ arbitrarily slow by imposing
more restrictive assumptions on the perturbation $V$, i.e., by letting
the parameter $p$ be sufficiently large.

In paper \cite{BJ} one treats in fact a larger class of perturbations
than we do since one requires only the finiteness of the norm
$\|V\|_{p,0}<\infty$ for $p$ sufficiently large. In other words, no
decay of matrix entries of $V$ along the diagonal is supposed. On the
other hand, one assumes that the initial quantum state belongs to the
domain $\Dom(H^\beta)$ for $\beta$ sufficiently large; $\beta$ is
never assumed therein to be smaller than $3/2$.  Furthermore, there is
no assumption on the periodicity of $H(t)$ both in \cite{BJ} and
\cite{J2}. On the other hand, our assertion concerns all initial
states from the domain $\Dom(H^{1/2})$ but we need a decay of matrix
entries of $V$ along the diagonal at least of order
$2\gamma=1-\alpha$. For the sake of comparison let us also recall the
bound on the energy diffusion exponent which has been derived in
\cite{BJ}. It is roughly of the form $\alpha/(1-f(p))^2$ where
$\alpha$ has the same meaning as above, $f(p)$ is positive and
$f(p)=O(p^{-1})$ as $p\to\infty$. Hence this bound is never smaller
than $\alpha$ and approaches this value as the parameter $p$ tends to
infinity.

\section{Upper bound on the energy growth}

\subsection{The gap condition and the modified Howland's classes}
\label{sec:gap-cond-modif}

On the contrary to Howland who introduced in \cite{H3} the classes
$\XX(p,\delta)$ equipped with the norm
\begin{displaymath}
  \Vert A\Vert_{p,\delta}^H = \sup_{m,n}\,\{(mn)^\delta
  \langle m-n\rangle^p\,\Vert A_{m,n}\Vert;\textrm{~}m,n\geq1\}, 
\end{displaymath}
we prefer to work with somewhat modified classes, called
$\YY(p,\delta)$, whose definition is adjusted to the gap condition
(\ref{VGC}). Our choice is dictated by an expected asymptotic behavior
of eigenvalues of $H$ in a typical situation. Let us briefly explain
where condition (\ref{VGC}) comes from.

We expect the eigenvalues to behave asymptotically as
$E_n=\const\,n^\alpha(1+o(1))$ where the error term $o(1)$ is supposed
to tend to zero sufficiently fast. The spectral gaps $E_{n+1}-E_n$
tend to zero as $n\to\infty$ if $\alpha\in\,]0,1[$.  Keeping the
notation $\gamma:=(1-\alpha)/2$ we wish to estimate the difference
$|E_m-E_n|$. To this end we replace $E_n$ simply by the power sequence
$n^\alpha$. Then one gets
\begin{displaymath}
  \frac{m^\alpha-n^\alpha}{m-n}(m n)^\gamma
  = \frac{\sinh (\alpha y)}{\sinh(y)}
  = e^{-(1-\alpha)|y|} \frac{1-e^{-2\alpha|y|}}{1-e^{-2|y|}}
\end{displaymath}
where $e^{2y}:=m/n$. Since the fraction
$(1-e^{-2\alpha|y|})/(1-e^{-2|y|})$ can be estimated by positive
constants both from above and from below we finally find that
\begin{displaymath}
  C_1\,\frac{|m-n|}{\max\{m,n\}^{2\gamma} }\le |m^\alpha-n^\alpha|
  \le C_2\,\frac{|m-n|}{\max\{m,n\}^{2\gamma} }
\end{displaymath}
for some $C_1,C_2>0$ and all $m,n\in\N$.

\begin{definition}
  \label{Y(p,a)}
  Let $p\geq1$, $\delta\geq0$ and $p+2\delta>1$. We say that an
  operator $A\in\Bs(\Hs)$ belongs to the class $\YY(p,\delta)$ if and
  only if
  \begin{equation}
    \label{eq:def_norm_Y}
    \Vert A\Vert_{p,\delta} := \sup_{m,n\in\N}\,
    \langle m-n\rangle^p \max\{m,n\}^{2\delta}\,\Vert A_{m, n}\Vert
    < \infty.
  \end{equation}
  
  Let $A(t)$ be a $T$-periodic function with values in the space
  $\YY(p,\delta)$. With some abuse of notation we shall also write
  \begin{displaymath}
    \Vert A\Vert_{p,\delta} := \sup_{t\in[0,T]}\sup_{m,n\in\N}\,
    \langle m-n\rangle^p \max\{m,n\}^{2\delta}\,
    \Vert A(t)_{m, n}\Vert.
  \end{displaymath}
\end{definition}

\begin{rems*}
  \renewcommand{\thepoint}{\roman{point}}
  \setcounter{point}{0}
  \addtocounter{point}{1}
  (\thepoint)\, It is straightforward to check that
  $\|\cdot\|_{p,\delta}$ is indeed a norm. Let us note that an
  equivalent norm is obtained if one replaces $\max\{m,n\}$ by $(m+n)$
  in (\ref{eq:def_norm_Y}).
  \addtocounter{point}{1}

  (\thepoint)\, Obviously, $\YY(p,\delta)\subset\XX(p,\delta)$. Notice
  that $\YY(p,\delta)$ is a Banach space equipped with the norm
  $\Vert\cdot\Vert_{p,\delta}$.
  \addtocounter{point}{1}

  (\thepoint)\, For the sake of convenience we have chosen the norm
  (\ref{eq:def_norm_Y}) with the restrictions $p\geq1$, $\delta\geq0$
  and $p+2\delta>1$ so that if it is finite for a matrix $\{A_{mn}\}$,
  $A_{mn}\in\Bs(\Ran{}P_n,\Ran{}P_m)$, then the matrix corresponds to
  a bounded operator $A\in\Bs(\Hs)$. Indeed, it is so since one can
  estimate the operator norm $\|A\|$ by the Shur-Holmgren norm
  \begin{displaymath}
    \Vert A\Vert_{S H} := \max\!
    \left\{\sup_{m\in\N}\sum_{n\in\N}\Vert A_{m,n}\Vert,
      \sup_{n\in\N} \sum_{m\in\N}\Vert A_{m,n}\Vert\right\}.
  \end{displaymath}
  It clearly holds
  \begin{displaymath}
    \|A\|_{SH} \leq \|A\|_{p,\delta}\sup_{m\in\N}\,\sum_{n=1}^\infty
    \frac{1}{\langle m-n\rangle^p\max\{m,n\}^{2\delta}}\,.
  \end{displaymath}
  The sum on the RHS equals
  \begin{eqnarray*}
    \frac{1}{m^{2\delta}}+\sum_{n=1}^{m-1}\frac{1}{(m-n)^pm^{2\delta}}
    +\sum_{n=m+1}^{\infty}\frac{1}{(n-m)^pn^{2\delta}}
    &\leq& 2+\frac{1}{m^{2\delta}}\int_1^m\frac{\dd x}{x^p}
    +\sum_{k=1}^{\infty}\frac{1}{k^{p+2\delta}} \\
    &=& 2+\frac{1-m^{-p+1}}{(p-1)m^{2\delta}}+\zeta(p+2\delta).
  \end{eqnarray*}
  Setting temporarily $x=\ln(m)$ and $\epsilon=p-1$ one can make use
  of the inequality
  \begin{displaymath}
    \frac{1}{\epsilon}
    \left(e^{-2\delta x}-e^{-(\epsilon+2\delta)x}\right)
    \leq \frac{1}{\epsilon+2\delta}
  \end{displaymath}
  which is true for all $x\geq0$ provided $\epsilon\geq0$,
  $\delta\geq0$ and $\epsilon+2\delta>0$. Thus one arrives at the
  estimate
  \begin{displaymath}
    \|A\|_{SH}
    \leq \left(2+\frac{1}{p+2\delta-1}+\zeta(p+2\delta)\right)
    \|A\|_{p,\delta}.
  \end{displaymath}
  Here $\zeta(u):=\sum_{k=1}^\infty{}k^{-u}$ denotes the Riemann's
  zeta function.
  \addtocounter{point}{1}

  (\thepoint)\, Finally let us note that the value $p=\infty$ is
  admissible. We shall use the norm $\|\cdot\|_{\infty,\delta}$
  exclusively in the case of diagonal matrices when it simply reduces
  to
  \begin{displaymath}
    \Vert A\Vert_{\infty,\delta} := \sup_{n\in\N}\,
    n^{2\delta}\,\Vert A_{n, n}\Vert.
  \end{displaymath}
\end{rems*}

From Definition~\ref{Y(p,a)} one immediately deduces the following
lemma.

\begin{lemma}
  \label{how}
  Suppose that $H$ is an operator on $\Hs$ with pure point spectrum
  whose eigen-values $E_1<E_2<\dots$ obey the upper bound in
  (\ref{VGC}). Let $p>2$. If $A\in\YY(p,\delta)$ then the commutator
  $[A,H]$ lies in $\YY(p-1,\delta+\gamma)$ and
  \begin{displaymath}
    \Vert[A,H]\Vert_{p-1,\delta+\gamma} \le C_H \Vert A\Vert_{p,\delta}.
  \end{displaymath}
\end{lemma}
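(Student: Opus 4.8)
The plan is to compute the matrix entries of the commutator directly in terms of those of $A$ and the eigenvalues of $H$, then bound them using the hypothesis $A\in\YY(p,\delta)$ together with the upper gap estimate in (\ref{VGC}). Concretely, since $H=\sum_n E_nP_n$, for $m\neq n$ we have
\begin{displaymath}
  ([A,H])_{m,n} = P_m(AH-HA)P_n = (E_n-E_m)A_{m,n},
\end{displaymath}
while for $m=n$ the diagonal blocks $([A,H])_{n,n}=E_nA_{n,n}-E_nA_{n,n}=0$ vanish (here I am using that each $P_n$ projects onto an eigenspace, so $HP_n=E_nP_n=P_nH$). Hence $[A,H]$ has the same off-diagonal support structure as $A$, and its entries are simply rescaled by the scalar factor $|E_m-E_n|$.

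Next I would substitute the upper bound from (\ref{VGC}), namely $|E_m-E_n|\le C_H\,|m-n|\,\max\{m,n\}^{-2\gamma}$, to get for $m\neq n$
\begin{displaymath}
  \Vert([A,H])_{m,n}\Vert
  = |E_m-E_n|\,\Vert A_{m,n}\Vert
  \le C_H\,\frac{|m-n|}{\max\{m,n\}^{2\gamma}}\,\Vert A_{m,n}\Vert .
\end{displaymath}
Now multiply both sides by the weight $\langle m-n\rangle^{p-1}\max\{m,n\}^{2(\delta+\gamma)}$ appropriate to the target class $\YY(p-1,\delta+\gamma)$. On the right the factor $\max\{m,n\}^{2(\delta+\gamma)}$ combines with $\max\{m,n\}^{-2\gamma}$ to leave $\max\{m,n\}^{2\delta}$, and since $m\neq n$ one has $|m-n|=\langle m-n\rangle$, so $\langle m-n\rangle^{p-1}\cdot|m-n| = \langle m-n\rangle^{p}$. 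This produces exactly $C_H\,\langle m-n\rangle^{p}\max\{m,n\}^{2\delta}\,\Vert A_{m,n}\Vert \le C_H\,\Vert A\Vert_{p,\delta}$. Taking the supremum over all $m,n$ (the diagonal terms contribute zero and cause no trouble), and noting that the index constraints for $\YY(p-1,\delta+\gamma)$ are met because $p>2$ forces $p-1\ge1$ and $(p-1)+2(\delta+\gamma)>1$, yields both the membership $[A,H]\in\YY(p-1,\delta+\gamma)$ and the claimed norm bound. The periodic/time-dependent case follows verbatim by taking in addition the supremum over $t\in[0,T]$, since the computation is pointwise in $t$.

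There is essentially no substantial obstacle here: the only point requiring a moment's care is the bookkeeping of exponents when passing from the weight of $\YY(p,\delta)$ to that of $\YY(p-1,\delta+\gamma)$, and the observation that replacing $\langle m-n\rangle$ by $|m-n|$ is legitimate precisely on the off-diagonal where the commutator lives. One should also remark that $[A,H]$ is \emph{a priori} only defined as a matrix (or as a quadratic form on $\Dom(H)$), but the bound just derived shows its Shur--Holmgren norm is finite, so by Remark~(\thepoint) above it does correspond to a genuine bounded operator in $\YY(p-1,\delta+\gamma)$ — which is what the statement asserts.
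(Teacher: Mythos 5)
Your proof is correct and is exactly the direct computation the paper has in mind when it says the lemma ``immediately'' follows from the definition: one computes $([A,H])_{m,n}=(E_n-E_m)A_{m,n}$, applies the upper gap bound from (\ref{VGC}), and trades one power of $|m-n|=\langle m-n\rangle$ (valid off the diagonal, where alone the commutator is supported) for the shift $(p,\delta)\mapsto(p-1,\delta+\gamma)$ in the weights. Your closing remark --- that the resulting finite Shur--Holmgren norm guarantees $[A,H]$, a priori only a formal matrix or a form on $\Dom(H)$, extends to a genuine bounded operator in $\YY(p-1,\delta+\gamma)$ --- is a legitimate and worthwhile point that the paper leaves implicit.
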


A basic technical tool we need is a lemma concerned with products of
two classes $\YY$. For its proof as well as for the remainder of the
paper the following two elementary inequalities will be useful.
According to the first one, for every $m,k\ge1$ it holds
\begin{equation}
  \label{<} 
  \frac m k\le 2\langle m-k\rangle.
\end{equation}
In fact, this is a direct consequence of the implication
$a,b\ge1\implies{}a+b\le2ab$.

The second inequality claims that if $a,b\geq0$ then
\begin{displaymath}
  \frac{\langle a+b\rangle}{\langle a\rangle\langle b\rangle}
  \leq \frac{2}{\langle\min\{a,b\}\rangle}\,.
\end{displaymath}
This can be reduced to the inequality
$\langle2a\rangle\leq2\langle{}a\rangle$ which is quite obvious.

\begin{lemma}
  \label{product_YY}
  Consider two classes $\YY(p_1,\delta_1)$, $\YY(p_2,\delta_2)$, with
  $p_1,p_2>1$, $\delta_1,\delta_2\geq0$. Suppose that numbers $p$,
  $\delta$ satisfy the inequalities
  \begin{displaymath}
    1<p\leq\min\{p_1,p_2\},\textrm{~}
    \max\{\delta_1,\delta_2\}\leq\delta\leq\delta_1+\delta_2,\textrm{~}
    p+2\delta\leq\min\{p_1+2\delta_1,p_2+2\delta_2\}.
  \end{displaymath}
  If $A\in\YY(p_1,\delta_1)$ and $B\in\YY(p_2,\delta_2)$ then
  \begin{equation}
    \label{eq:AB_pdelta}
    \|AB\|_{p,\delta} \leq C(p,\delta-\delta_0)\,
    \|A\|_{p_1,\delta_1}\|B\|_{p_2,\delta_2}
  \end{equation}
  where
  \begin{displaymath}
    C(p,\Delta) = 2^{p+2\Delta+1}(1+2\zeta(p))
  \end{displaymath}
  and $\delta_0=\min\{\delta_1,\delta_2\}$. Consequently,
  $\YY(p_1,\delta_1)\YY(p_2,\delta_2)\subset\YY(p,\delta)$.
\end{lemma}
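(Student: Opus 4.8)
The plan is to estimate the matrix entries of $AB$ directly from those of $A$ and $B$. Since the spectral projections of $H$ form a resolution of the identity and $A$ is bounded, one has the norm-convergent expansion $(AB)_{m,n}=\sum_{k\in\N}A_{m,k}B_{k,n}$ (the series converges in operator norm because $\sum_k\|A_{m,k}\|\,\|B_{k,n}\|<\infty$), so that
\[
  \|(AB)_{m,n}\|\le\|A\|_{p_1,\delta_1}\|B\|_{p_2,\delta_2}
  \sum_{k\in\N}\frac{1}{\langle m-k\rangle^{p_1}\max\{m,k\}^{2\delta_1}}\,
  \frac{1}{\langle k-n\rangle^{p_2}\max\{k,n\}^{2\delta_2}}.
\]
It therefore suffices to show that $\langle m-n\rangle^p\max\{m,n\}^{2\delta}$ times the sum on the right is bounded by $C(p,\delta-\delta_0)$, uniformly in $m,n\in\N$; taking the supremum over $m,n$ then yields both the norm estimate (\ref{eq:AB_pdelta}) and, the right-hand side being finite, the inclusion $AB\in\YY(p,\delta)$.

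The first step is to absorb the diagonal weight $\max\{m,n\}^{2\delta}$. The whole problem is invariant under simultaneously interchanging $m\leftrightarrow n$ and $(p_1,\delta_1)\leftrightarrow(p_2,\delta_2)$, so it is enough to treat the case $m\ge n$, where $\max\{m,n\}=m$. I would split $m^{2\delta}=m^{2\delta_1}\cdot m^{2(\delta-\delta_1)}$ and bound $m^{2\delta_1}\le\max\{m,k\}^{2\delta_1}$, while, using the elementary inequality (\ref{<}) together with the hypothesis $0\le\delta-\delta_1\le\delta_2$,
\[
  m^{2(\delta-\delta_1)}=\Big(\frac mk\Big)^{2(\delta-\delta_1)}k^{2(\delta-\delta_1)}
  \le 2^{2(\delta-\delta_1)}\,\langle m-k\rangle^{2(\delta-\delta_1)}\,\max\{k,n\}^{2\delta_2}.
\]
Inserting these bounds, the factors $\max\{m,k\}^{-2\delta_1}$ and $\max\{k,n\}^{-2\delta_2}$ cancel exactly, and the $k$-th summand is bounded by $2^{2(\delta-\delta_1)}\langle m-k\rangle^{-(p_1-2(\delta-\delta_1))}\langle k-n\rangle^{-p_2}$. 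The remaining hypotheses $p\le\min\{p_1,p_2\}$ and $p+2\delta\le\min\{p_1+2\delta_1,p_2+2\delta_2\}$ give $p_1-2(\delta-\delta_1)\ge p$ and $p_2\ge p$, so both off-diagonal exponents may be lowered to $p$; moreover $\delta-\delta_1\le\delta-\delta_0$. Hence everything reduces to the discrete convolution estimate
\[
  \langle m-n\rangle^p\sum_{k\in\N}\frac{1}{\langle m-k\rangle^p\langle k-n\rangle^p}
  \le 2^{p+1}\bigl(1+2\zeta(p)\bigr).
\]

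For this last inequality I would use the second elementary inequality stated just before the lemma: with $a=|m-k|$, $b=|k-n|$ it gives $\langle m-n\rangle\le\langle a+b\rangle\le 2\langle m-k\rangle\langle k-n\rangle/\langle\min\{a,b\}\rangle$, whence
\[
  \langle m-n\rangle^p\,\frac{1}{\langle m-k\rangle^p\langle k-n\rangle^p}
  \le\frac{2^p}{\langle\min\{|m-k|,|k-n|\}\rangle^p}
  \le 2^p\Bigl(\frac{1}{\langle m-k\rangle^p}+\frac{1}{\langle k-n\rangle^p}\Bigr),
\]
and summing over $k$ produces $2^p\cdot 2(1+2\zeta(p))$. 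Collecting the constants gives $\|AB\|_{p,\delta}\le 2^{p+2(\delta-\delta_0)+1}(1+2\zeta(p))\|A\|_{p_1,\delta_1}\|B\|_{p_2,\delta_2}$, as claimed, and the inclusion of classes follows at once. The one delicate point is the diagonal-weight transfer in the first step: the splitting $2\delta=2\delta_1+2(\delta-\delta_1)$ must be made in exactly this way so that, after the $\max$-factors cancel, both off-diagonal exponents stay $\ge p$, and it is precisely the hypotheses $\max\{\delta_1,\delta_2\}\le\delta\le\delta_1+\delta_2$ and $p+2\delta\le\min\{p_i+2\delta_i\}$ that make this balancing possible (the bound $\delta\ge\delta_2$ being the one used in the symmetric case $m<n$).
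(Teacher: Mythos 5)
Your proof is correct and follows the paper's argument in all essentials: the same Schur-type expansion of $(AB)_{m,n}$, the same splitting $m^{2\delta}=m^{2\delta_1}\cdot m^{2(\delta-\delta_1)}$ for $m\geq n$, the same use of the two elementary inequalities (\ref{<}) and the $\langle\min\{a,b\}\rangle$ bound to reduce to a convolution sum, and the same constant. The paper packages the diagonal and off-diagonal weights together into a function $h(m,n,\ell)$ and bounds it by $2^{2(\delta-\delta_0)}$ before summing, whereas you cancel the $\max$-factors first and then isolate a pure $\langle\cdot\rangle^{-p}$ convolution estimate, but this is merely a reorganization of identical steps.
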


\begin{proof}
  Under the assumptions we have
  \begin{displaymath}
    \langle m-n\rangle^p\max\{m,n\}^{2\delta}\|(AB)_{mn}\|
    \leq \langle m-n\rangle^p\max\{m,n\}^{2\delta}
    \sum_{\ell=1}^\infty\|A_{m\ell}\|\|B_{\ell n}\|
  \end{displaymath}
  which is less than or equal to
  \begin{equation}
    \label{eq:ABsum}
    \|A\|_{p_1,\delta_1}\|B\|_{p_2,\delta_2}
    \sum_{\ell=1}^\infty
    \frac{\langle m-n\rangle^{p}\max\{m,n\}^{2\delta}} 
    {\langle m-\ell\rangle^{p_1}\max\{m,\ell\}^{2\delta_1} 
      \langle n-\ell\rangle^{p_2}\max\{n,\ell\}^{2\delta_2}}\,.
  \end{equation}
  The summand in (\ref{eq:ABsum}) can be estimated from above by
  \begin{displaymath}
    \frac{\langle|m-\ell|+|n-\ell|\rangle^{p}\max\{m,n\}^{2\delta}}
    {\langle m-\ell\rangle^{p_1}\max\{m,\ell\}^{2\delta_1}
      \langle n-\ell\rangle^{p_2}\max\{n,\ell\}^{2\delta_2}}
    \leq\frac{2^p}{\langle\min\{|m-\ell|,|n-\ell|\}\rangle^p}\,
    h(m,n,\ell)
  \end{displaymath}
  where
  \begin{displaymath}
    h(m,n,\ell) = \langle m-\ell\rangle^{p-p_1}
    \langle n-\ell\rangle^{p-p_2}\,
  \frac{\max\{m,n\}^{2\delta}}{\max\{m,\ell\}^{2\delta_1}
    \max\{n,\ell\}^{2\delta_2}}\,.
  \end{displaymath}
  One can further estimate $h(m,n,\ell)$. For definiteness let us
  suppose that $m\geq{}n$. Then, since $p-p_2\leq0$ and
  $\delta_2\geq\delta-\delta_1$,
  \begin{eqnarray*}
    h(m,n,\ell) &\leq& \langle m-\ell\rangle^{p-p_1}
    \frac{m^{2\delta}}
    {\max\{m,\ell\}^{2\delta_1}\max\{n,\ell\}^{2(\delta-\delta_1)}}
    \,\leq\, \langle m-\ell\rangle^{p-p_1}
    \left(\frac{m}{\ell}\right)^{\!2(\delta-\delta_1)}\\
    &\leq& 2^{2(\delta-\delta_1)}
    \langle m-\ell\rangle^{p-p_1+2(\delta-\delta_1)}
    \,\leq\, 2^{2(\delta-\delta_0)}.
  \end{eqnarray*}
  It follows easily that the sum in (\ref{eq:ABsum}) is bounded from
  above by $2^{p+2(\delta-\delta_0)}(2+4\zeta(p))$ and this estimate
  implies (\ref{eq:AB_pdelta}).
\end{proof}

\begin{corollary}
 \label{lemma_XY}
 Let $p>2$, $i\ge1$ and $\gamma\in\,]0,\frac12[$. Then the following
 product formulas hold true:
 \begin{eqnarray*}
   &&\YY(p,i\gamma)\,\YY(p,i\gamma) \subset \YY(p-1,(i+1)\gamma) \\
   &&\YY(p,(i-1)\gamma)\,\YY(p-1,i\gamma) \subset \YY(p-1,i\gamma) \\
   &&\YY(p+1,(i-1)\gamma) \,\YY(p-1,(i+1)\gamma)
   \subset \YY(p-1,(i+1)\gamma)
 \end{eqnarray*}
 The formulas are also true for the opposite order of factors on the
 LHS. Moreover, if operators $A$ and $B$ belong to the corresponding
 classes on the LHS then
 \begin{eqnarray*}
   &&\Vert A B\Vert_{p-1,
     (i+1)\gamma} \leq C_{p}\,\Vert A\Vert_{p, i\gamma}
   \Vert B\Vert_{p, i\gamma}\\
   &&\Vert A B\Vert_{p-1, i \gamma}
   \leq C_{p}\,\Vert A\Vert_{p,(i-1)\gamma}
   \Vert B\Vert_{p-1, i\gamma}\\
   &&\Vert A B\Vert_{p-1, (i+1)\gamma} \leq 2C_{p}\,
   \Vert A\Vert_{p+1,(i-1)\gamma}\Vert B\Vert_{p-1,(i+1)\gamma},
 \end{eqnarray*}
 where
 \begin{equation}
   \label{norm_const}
   C_{p}:= 2^{p+1}(1+2\,\zeta(p-1)).
 \end{equation}
 The norm estimates hold true also for the opposite order of factors
 $A$ and $B$ in the product.
\end{corollary}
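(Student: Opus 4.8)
The plan is to obtain all three inclusions, together with their quantitative bounds, as special cases of Lemma~\ref{product_YY}; the entire task reduces to checking that the index triples occurring here meet the hypotheses of that lemma and to comparing the constants it produces with $C_p$. In each case the three conditions on $(p,\delta)$ in Lemma~\ref{product_YY} collapse, after the trivial inequalities that use $i\ge1$ and $\gamma>0$, to a single inequality on $\gamma$, which holds because $\gamma<\frac12$.

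For the first formula I would apply Lemma~\ref{product_YY} with $p_1=p_2=p$, $\delta_1=\delta_2=i\gamma$ and target indices $p-1$, $(i+1)\gamma$: one checks $1<p-1\le p$ (using $p>2$), $i\gamma\le(i+1)\gamma\le2i\gamma$, and $(p-1)+2(i+1)\gamma\le p+2i\gamma$, the last being $2\gamma\le1$. Since $\delta_0=\min\{\delta_1,\delta_2\}=i\gamma$, the lemma yields the constant $C(p-1,\gamma)=2^{p+2\gamma}(1+2\zeta(p-1))\le2^{p+1}(1+2\zeta(p-1))=C_p$. For the second formula I take $p_1=p$, $\delta_1=(i-1)\gamma$, $p_2=p-1$, $\delta_2=i\gamma$ and target $p-1$, $i\gamma$; then $\min\{p_1,p_2\}=p-1$, $\max\{\delta_1,\delta_2\}=i\gamma\le(2i-1)\gamma$, and the third condition again reduces to $2\gamma\le1$, while $\delta_0=(i-1)\gamma$ gives $\delta-\delta_0=\gamma$ and the same constant $C_p$. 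For the third formula I take $p_1=p+1$, $\delta_1=(i-1)\gamma$, $p_2=p-1$, $\delta_2=(i+1)\gamma$ and target $p-1$, $(i+1)\gamma$; the conditions are verified as before, the third now becoming $2\gamma\le2$, and since $\delta_0=(i-1)\gamma$ one has $\delta-\delta_0=2\gamma$, producing the constant $C(p-1,2\gamma)=2^{p+4\gamma}(1+2\zeta(p-1))\le2\cdot2^{p+1}(1+2\zeta(p-1))=2C_p$.

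Because the hypotheses of Lemma~\ref{product_YY} and the quantity $\delta_0$ are symmetric under interchanging $(p_1,\delta_1)$ and $(p_2,\delta_2)$, the same bounds hold for the opposite order of factors, which settles the remaining assertions. The only point needing a remark is the boundary case $i=1$ in the last two formulas, where the left-hand factor lies in a class with $\delta=0$; this is legitimate since Definition~\ref{Y(p,a)} and Lemma~\ref{product_YY} permit $\delta\ge0$ and here $p_1>1$, so no degeneracy occurs. I expect no genuine obstacle: the statement is a bookkeeping corollary of the product lemma, the only care being the arithmetic with the exponent of $2$ in the constants, which always gains a bit because $2\gamma<1$.
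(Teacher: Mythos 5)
Your proof is correct and matches the paper's intent: the corollary is stated immediately after Lemma~\ref{product_YY} precisely because it follows by this kind of direct parameter substitution and constant comparison, exactly as you carry out. One minor slip in your write-up: for the third formula the condition $(p-1)+2(i+1)\gamma\le(p+1)+2(i-1)\gamma$ reduces to $4\gamma\le2$ (i.e., $2\gamma\le1$), not $2\gamma\le2$, but since $\gamma<\tfrac12$ this does not affect the argument.
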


\subsection{The main theorem}

\begin{theorem}
  \label{dyn_stab}
  Let a quantum system be described by a Hamiltonian of the form
  \begin{displaymath}
    H(t) = H+V(t) \textrm{~~{\rm on}~}\Hs
  \end{displaymath}
  where $H$ is a self-adjoint operator with a pure point spectrum and
  the spectral decomposition
  \begin{displaymath}
    H=\sum_{n\in\N} E_n P_n.
  \end{displaymath}
  Suppose that the eigen-values of $H$ are ordered increasingly and
  obey the gap condition (\ref{VGC}) with $\gamma\in\,]0,\frac12[\,$.
  Set $\alpha= 1-2\gamma$. For $p>2$ assume that
  \begin{equation}
    \label{eq:l_bound}
    \ceil{p-1} > \frac{1}{2(1-\alpha)}.
  \end{equation}
  Then there exists $\varepsilon>0$ such that if $V(t)$ is
  $T$-periodic, symmetric, continuously differentiable in the strong
  sense and obeys $\Vert{}V\Vert_{p,\gamma}\le\varepsilon$ then the
  propagator $U(t,s)$ associated to the Hamiltonian $H+V(t)$ maps
  $Q_{H}$, the form domain of $H$, onto itself and for every
  $\Psi\in{}Q_{H}$ it holds
  \begin{equation}
    \label{HPsi_t}
    \langle H\rangle_\Psi(t)
    := \langle U(t,0)\Psi,HU(t,0)\Psi\rangle = O(t^\sigma)
  \end{equation}
  where
  \begin{displaymath}
    \sigma = \frac{2\alpha}{2\ceil{p-1}(1-\alpha)-1}\,.
  \end{displaymath}
\end{theorem}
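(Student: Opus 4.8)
\medskip
\noindent\textbf{Strategy of the proof.}
The plan is to transfer Howland's and Nenciu's adiabatic renormalization to the shrinking‑gap situation, using the calculus of the classes $\YY(p,\delta)$ as the bookkeeping device, and to finish with a Gronwall‑type differential inequality for $\langle H\rangle_\Psi(t)$. I would first pass to the extended (``Floquet'') Hilbert space $\Hs\otimes L^2(\R/T\Z)$, on which the propagator $U(t,s)$ is encoded by the evolution generated by $K:=H\otimes1-1\otimes\im\partial_t+V$; both the energy expectation in \eqref{HPsi_t} and the statement that $U(t,s)$ maps $Q_H$ onto itself can then be read off once $K$ has been renormalized and the intertwining unitaries are shown to preserve the form domain. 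The latter is automatic because all intertwiners are built from operators lying in classes $\YY(\cdot,\cdot)$, whose commutators with $H$ are again of that type by Lemma~\ref{how}, so that $(1+H)^{1/2}$‑boundedness is preserved.

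Next I would construct inductively $T$‑periodic, symmetric generators $S_1,\dots,S_N$, with $N=\ceil{p-1}$, by solving at the $j$‑th step the homological equation $\im[H,S_j]+\partial_tS_j=D_j-V_j^{\mathrm{od}}$ Fourier mode by Fourier mode in $t$, where $V_j$ is the perturbation surviving after $j-1$ steps, $V_j^{\mathrm{od}}$ its off‑diagonal part and $D_j$ its diagonal part; conjugating $K$ by $e^{\im S_j}$ then removes $V_j^{\mathrm{od}}$ at the cost of a new, more regular perturbation $V_{j+1}$. The whole point of the $\YY$‑calculus is that solving such an equation — that is, inverting $\ad_H$ against the gap lower bound in \eqref{VGC} — costs exactly one unit of the perpendicular decay exponent $p$ and buys $\gamma$ of extra diagonal weight (Lemma~\ref{how} read in the reverse direction), while the quadratic remainders $[S_j,V_j]$, $[S_j,D_j]$ and the higher Baker--Campbell--Hausdorff terms land in the correct classes by Lemma~\ref{product_YY} and Corollary~\ref{lemma_XY}. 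Thus after $j$ steps the residual perturbation lies in a class $\YY(p-j,\delta_j)$ with $\delta_j$ increasing roughly like $j\gamma$ and with norm controlled by a power of $\varepsilon$; since each step consumes one unit of perpendicular decay and one starts with only $p$ of them, the procedure can be run for $N=\ceil{p-1}$ steps, and the hypothesis \eqref{eq:l_bound} is precisely what makes the accumulated diagonal weight of the residual large enough for the final estimate below to yield a positive exponent. Choosing $\varepsilon$ small in a $p$‑dependent way makes all the exponential and Neumann series converge, keeps every $e^{\im S_j}$ close to the identity, and guarantees that the composition of the intertwiners maps $Q_H$ continuously onto itself.

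After $N$ renormalization steps the Hamiltonian takes the form $\tilde H(t)=H+B(t)+R(t)$ with $B(t)$ diagonal and $R(t)$ the residual, off‑diagonal and carrying the accumulated diagonal decay of order $\ceil{p-1}\gamma$. For the evolution $\tilde U$ generated by $\tilde H$ one then computes, using that $B$ commutes with $H$,
\begin{displaymath}
  \frac{\dd}{\dd t}\,\langle\tilde U(t,0)\Phi,\,H\,\tilde U(t,0)\Phi\rangle
  = \langle\tilde U(t,0)\Phi,\,\im[R(t),H]\,\tilde U(t,0)\Phi\rangle .
\end{displaymath}
By Lemma~\ref{how} the commutator $[R,H]$ gains one further unit $\gamma$ of diagonal weight, and trading whatever perpendicular decay is left for additional powers of $\max\{m,n\}$ by means of the elementary inequality \eqref{<}, one bounds the right‑hand side by $C\,\langle H\rangle^{1-1/\sigma}$, where $1/\sigma=(2\ceil{p-1}(1-\alpha)-1)/(2\alpha)>0$ thanks to \eqref{eq:l_bound}. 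Integrating this differential inequality gives $\langle\tilde H\rangle(t)=O(t^\sigma)$, and transporting the bound back through the bounded, $Q_H$‑preserving intertwiners — which alter $\langle H\rangle$ only by bounded multiplicative and additive amounts — yields \eqref{HPsi_t}. In the commuting case $[V(t),V(s)]=0$, $\int_0^TV\,\dd t=0$, the very first homological step can also be solved on the diagonal blocks, which is why the weaker hypothesis $\Vert V\Vert_{p+1,0}\le\varepsilon$ then suffices.

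The main obstacle is the solvability of the homological equations: because the gaps shrink, the denominators $E_m-E_n+k\omega$ ($m\neq n$, $k\in\Z$, $\omega=2\pi/T$) are not bounded away from zero, so $S_j$ cannot be obtained by naive division. The resolution — the technical core of the argument — is that the two‑sided gap condition \eqref{VGC} together with the perpendicular decay built into $\Vert\cdot\Vert_{p,\gamma}$ lets one absorb these denominators inside the $\YY$‑calculus, at the unavoidable price of one unit of $p$ per step; this both caps the number of steps at $\ceil{p-1}$ and fixes the exponent $\sigma$. A second, milder technical point is that $V(t)$ is only assumed strongly $C^1$, so the time derivatives produced by the conjugations must be controlled through the Fourier representation rather than by differentiating matrix entries, which is why the scheme is set up in the Floquet picture from the start.
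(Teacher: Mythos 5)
Your overall architecture — iterated renormalization using the $\YY$-calculus to trade perpendicular decay for diagonal weight, $\ceil{p-1}$ effective steps, and a final dynamical estimate — matches the paper in spirit, but two of the crucial analytic steps have genuine gaps that the paper's argument is built precisely to avoid.

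\textbf{Small divisors in the homological equation.} You propose to solve $\im[H,S_j]+\partial_tS_j=\cdots$ Fourier mode by Fourier mode, which requires inverting the operator $\ad_H+k\omega$, i.e.\ dividing by $E_m-E_n+k\omega$ for $m\neq n$, $k\in\Z$. The gap condition \eqref{VGC} controls $E_m-E_n$ only; for $k\neq0$ and large $m,n$ with $|m-n|$ moderate, $E_m-E_n$ ranges densely near $0$ (the gaps shrink), so $E_m-E_n+k\omega$ can be arbitrarily small or even vanish. No amount of $\YY$-calculus absorbs a genuine zero denominator, and without a Diophantine hypothesis the construction of $S_j$ fails. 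The paper never couples $\ad_H$ with $\partial_t$: the anti-adiabatic transform in Section~\ref{section_aa} simply integrates in time, $F(t)=\int_0^t\tilde Z$, so only $k\omega$ ($k\neq0$) is inverted; the term $\bar Z$ (the $k=0$ mode) that this leaves behind is then removed by the separate progressive diagonalization of Section~\ref{sec:diagonalization}, which inverts only $E_m-E_n$ and is exactly what the two-sided gap condition controls. Keeping these two inversions disjoint is the mechanism that makes the whole scheme work; merging them, as your homological equation does, re-introduces the resonance problem.

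\textbf{The Gronwall endgame does not close.} You claim that $\tfrac{\dd}{\dd t}\langle H\rangle_\Phi\le C\langle H\rangle_\Phi^{\,1-1/\sigma}$. In the regime where the theorem is nontrivial ($\ceil{p-1}>p_{\min}$, see Remark~\ref{rema}\,(iii)) one has $1-1/\sigma<0$, so this would say that the energy flux \emph{decreases} as a function of the instantaneous energy. That inequality cannot hold for a general state: if $\Phi$ is concentrated on low modes, then $\langle H\rangle_\Phi$ is of order $\|\Phi\|^2$, while $|\langle\Phi,[R,H]\Phi\rangle|$ is also of order $\|\Phi\|^2$ and is certainly not $O\bigl(\langle H\rangle_\Phi^{\,1-1/\sigma}\bigr)$ with a large negative exponent. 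A Schur-type estimate on $[R,H]\in\YY(p',\delta')$ only gives $|\langle\Phi,[R,H]\Phi\rangle|\lesssim\sum_n n^{-2\delta'}\|P_n\Phi\|^2\le C\|\Phi\|^2$, i.e.\ the trivial bound $\langle H\rangle=O(t)$ of \eqref{nenciu}, not the exponent $\sigma$. The correct argument (Theorem~\ref{joye}, after Nenciu) is structurally different: one compares $U(t,0)$ via Duhamel's formula to the propagator $U_d$ of the diagonal part $H+W_d(t)$ — which commutes with every $P_n$ and so conserves $H$ exactly — bounds $\|P_nR(t)\Psi\|$ using the decay $\|P_nWQ_nH^{-1/2}\|\lesssim n^{-\mu-\alpha/2}$ from Lemma~\ref{thm:fromYtomu}, splits the sum over $n$ at a cutoff $N$, and optimizes $N=[Ct^{2/(2\mu-1)}]$. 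This integrated, self-consistent estimate on $\sup_{0\le s\le t}\|H^{1/2}U(s,0)\Psi\|$ is what produces the exponent $\sigma=2\alpha/(2\mu-1)$, and it is not recoverable from a pointwise differential inequality.

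Concretely, you need to replace the homological-equation step by the anti-adiabatic/diagonalization alternation of Propositions~\ref{a-a} and~\ref{p-d}, and replace the Gronwall finish by the Duhamel bootstrap of Theorem~\ref{joye}; the rest of your bookkeeping (class calculus, counting of steps, and the role of \eqref{eq:l_bound}) is sound.
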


\begin{remark}
  \label{rema}
  \renewcommand{\labelenumi}{(\roman{enumi})}
  \renewcommand{\theenumi}{\roman{enumi}}
  \begin{enumerate}
  \item{There is no assumption on the dimension of $\Ran P_n$. The
      multiplicities of eigenvalues may grow arbitrarily, they can
      even be infinite.}
  \item\label{ad:VtVs}{Suppose that $V(t)\in \YY(p+1,0),$ with $p>2$,
      is $T$-periodic, symmetric, continuously differentiable in the
      strong sense and such that $[V(t),V(s)]=0$ for every $t,s$, and
      $\bar{V}:=T^{-1}\int_0^TV(t)\,\dd{}t=0$. Then one arrives at the
      same estimate (\ref{HPsi_t}). Let us outline the proof. \par
      First,
      as explained in Remark~\ref{factorised} below in which one has
      to set $r=p+1$, $Y=0$, $Z(t)=V(t)$ and $\bar{Z}=0$, one can
      transform anti-adiabatically $H+V(t)$ into $H+V_1(t)$ so that
      $V_1(t)\in\YY(p,\gamma)$ and
      \begin{displaymath}
        \|V_1\|_{p,\gamma} \leq \frac{C_H}{2C_{p+1}}
        \left(\exp(4C_{p+1}T\|V\|_{p+1,0})-1\right).
      \end{displaymath}
      Afterwards one can apply Theorem~\ref{dyn_stab} to the
      Hamiltonian $H+V_1(t)$. Arguing similarly as in the proof of
      Theorem~\ref{dyn_stab} in
      Subsection~\ref{sec:proof_thm_dyn_stab}, one finds that the
      energy diffusion for the Hamiltonian $H+V_1(t)$ is related to
      that for the Hamiltonian $H+V(t)$ by a quantity which is bounded
      in time. This is to say that estimate (\ref{HPsi_t}) holds true
      for the time evolution governed by the Hamiltonian $H+V(t)$ as
      well.}
  \item{Provided that $H(t)=H+V(t),$ with $V$ in $C^1(\R,\Bs(\Hs))$ in
      the strong sense there exists a trivial bound which does not
      depend on the spectral properties of $H$ (see \cite{N2}), namely
      \begin{equation}
        \label{nenciu}
        |\langle U(t,0)\Psi,H(t)U(t,0)\Psi \rangle|
        \le |\langle\Psi,H(0)\Psi\rangle|
        +|t|\,\sup_{s\in\R} \Vert \dot V(s)\Vert
        \Vert\Psi\Vert^2.
      \end{equation}
      For its derivation it suffices to notice that
      \begin{displaymath}
        \partial_t\langle U(t,0)\Psi,H(t)U(t,0)\Psi\rangle
        = \langle{}U(t,0)\Psi,\dot{V}(t)U(t,0)\Psi\rangle
      \end{displaymath}
      where $\dot{V}(t)$ denotes the time derivative in the strong
      sense. The estimate given by Theorem~\ref{dyn_stab} is better
      than this trivial bound if
      \begin{displaymath}
        \ceil{p-1} > p_{min}:= \frac{2\alpha+1}{2(1-\alpha)}.
      \end{displaymath}
      For example, in the case of $\alpha=2/3$ (the quantum ball) we
      get $p_{min}=7/2$. The condition $\ceil{p-1}>p_{min}$ is
      fulfilled if $p>4$ and then Theorem~\ref{dyn_stab} tells us that
      $\langle{}H\rangle_\Psi(t)=O(t^{4/5})$.}
  \item{Apart of the energy itself it is also of interest to consider
      expectation values of functions of the Hamiltonian $f(H)$ for a
      suitable choice of the function $f(\lambda)$, see for example
      \cite[Section~3]{J1}. In particular this concerns the momenta
      $H^m$, $m\in\N$. Unfortunately only several steps of our
      procedure allow for an immediate extension of this type and so
      we are not able at the moment to deal with this more general
      case. Let us discuss shortly this point. By inspection of the
      proofs of Theorem~\ref{joye} and Lemma~\ref{thm:fromYtomu} one
      finds that in both of them one can safely replace $H$ by $f(H)$
      as long as the sequence $\{f(E_n)\}$ satisfies, instead of
      $\{E_n\}$, all assumptions. The propagator $U(t,s)$ in the
      formulation of Theorem~\ref{joye} is still associated to the
      operator $H+W(t)$. The main obstacle is encountered in the proof
      of Theorem~\ref{dyn_stab} in
      Subsection~\ref{sec:proof_thm_dyn_stab}. In analogy to estimate
      (\ref{eq:tilde_average}) one can derive that
      \begin{displaymath}
        \langle\tilde U(t,0)\tilde\Psi,f(H)\tilde
        U(t,0)\tilde\Psi\rangle
        = \langle U(t,0)\Psi,J(t)f(H)J(t)^\ast U(t,0)\Psi\rangle
        = O(t^\sigma)
      \end{displaymath}
      where $\sigma$ is the same as in (\ref{eq:tilde_average}). Here
      $J(t)$ is, as detailed in the remainder of the paper, a suitable
      unitary operator constructed with the aid of the anti-adiabatic
      transform. As a next step in the proof of Theorem~\ref{dyn_stab}
      one argues that the difference $H-J(t)HJ(t)^\ast$ is bounded.
      However it does not seem to be possible to claim in general the
      same for the operator $f(H)-J(t)f(H)J(t)^\ast$. And this is
      exactly the point where the discussed extension fails.}
\end{enumerate}
\end{remark}

\subsection{Applications}

\subsubsection{The Howland's model}
\label{sec:howlands_model}

Let us apply the results of Theorem~\ref{dyn_stab} to the model
introduced by Howland in \cite{H3} and described by the Hamiltonian
$|p|^\alpha+\varepsilon{}v(\theta,t)$, with $\alpha\in\,]0,1[\,$,
which is supposed to act on $L^2(S^1,\dd\theta)$ and to be
$2\pi$-periodic in time. Set $H:=|p|^\alpha$. The spectral
decomposition of $H$ reads
\begin{displaymath}
  H = \sum_{n\ge 0} n^\alpha P_n \textrm{~~where~~}
  P_n \Psi(\theta) = \frac{1}{\pi}\int_0^{2\pi}
  \cos\left(n(\theta-s)\right)\!\Psi(s)\,\dd s.
\end{displaymath}
Except of the first one the multiplicities of the eigen-values are
equal $2$. Using integration by parts one derives that any
multiplication operator $a$ by a function $a(\theta)\in{}C^k$ obeys
the estimate
\begin{displaymath}
  \Vert P_m\,a\,P_n\Vert \leq \frac{2\sqrt{2\pi}\,
    \Vert a^{(k)}\Vert}{\langle m-n\rangle^k}\,.
\end{displaymath}
Hence $a\in\YY(k,0)$. Applying Theorem~\ref{dyn_stab} and
Remark~\ref{rema} ad~(\ref{ad:VtVs}) we get

\begin{proposition}
  Let $\alpha\in\,]0,1[$ and $v(\theta,t)$ be a real-valued function
  which is $2\pi$-periodic both in the space and in the time variable.
  Suppose that $v(\theta,t)$ is $C^k$ in $\theta $ and $C^1$ in $t$
  and such that $\int_0^{2\pi}v(\theta,t)\,\dd{}t=0$. If $k>3$ and
  $k>(5-4\alpha)/(2(1-\alpha))$ then there exists $\varepsilon_0>0$
  such that for every real $\varepsilon$,
  $|\varepsilon|<\varepsilon_0$, the propagator $U(t,s)$ associated to
  \begin{displaymath}
    H(t) := |p|^\alpha+\varepsilon v(\theta,t)\textrm{~~{\rm on}~}
    L^2(S^1,\dd\theta)
  \end{displaymath}
  preserves the domain $\Dom(|p|^{\alpha/2})$ and for every $\Psi$
  from this domain it holds true that
  \begin{displaymath}
    \langle U(t,0)\Psi,H(t)U(t,0)\Psi\rangle = O(t^\sigma)
  \end{displaymath}
  where
  \begin{equation}
    \label{eq:howland_sigma}
    \sigma = \frac{2\alpha}{2(k-2)(1-\alpha)-1}\,.
  \end{equation}
\end{proposition}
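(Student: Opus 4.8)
The plan is to derive this from Theorem~\ref{dyn_stab} through the commuting-perturbation variant described in Remark~\ref{rema}~ad~(\ref{ad:VtVs}), applied with $p=k-1$. Put $H:=|p|^\alpha=\sum_{n\ge0}n^\alpha P_n$. Since the hypotheses of Theorem~\ref{dyn_stab} are stated for eigenvalues indexed by $\N$ with $E_1>0$, I would first replace $H$ by $H+1$ and relabel the eigenvalues so that they form a strictly increasing positive sequence indexed by $\N$; this alters the propagator only by the scalar phase $e^{-\im(t-s)}$ and the energy expectation only by the additive constant $1$, so it affects neither the form domain $\Dom(|p|^{\alpha/2})=\Dom(H^{1/2})$ nor the $O(t^\sigma)$ growth, while it removes any ambiguity with the eigenvalue $0$. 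For the resulting sequence the gap condition (\ref{VGC}) holds with $\gamma=(1-\alpha)/2$: this is precisely the elementary two-sided bound on $|m^\alpha-n^\alpha|$ established in Subsection~\ref{sec:gap-cond-modif}, the shift of labels being absorbed into the constants $c_H,C_H$ (it affects only finitely many small indices). We then set $\alpha=1-2\gamma$; recall that no assumption is made on the multiplicities of the $P_n$, so the twofold degeneracy of the nonzero eigenvalues is immaterial.

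Next I would verify the hypotheses on the perturbation $V(t):=\varepsilon v(\cdot,t)$, regarded as a bounded multiplication operator. As $v$ is real-valued, $2\pi$-periodic in $t$ and $C^1$ in $t$, the operator $V(t)$ is symmetric, $T$-periodic with $T=2\pi$, and continuously differentiable in the strong sense. Because $v(\cdot,t)$ is $C^k$ in $\theta$, the integration-by-parts estimate recalled just before the statement, $\|P_m\,a\,P_n\|\le 2\sqrt{2\pi}\,\|a^{(k)}\|\,\langle m-n\rangle^{-k}$, shows $V\in\YY(k,0)$ with $\|V\|_{k,0}\le 2\sqrt{2\pi}\,|\varepsilon|\,\sup_{t\in[0,2\pi]}\|\partial_\theta^{k}v(\cdot,t)\|_\infty<\infty$; thus $V(t)\in\YY(p+1,0)$ for $p:=k-1$. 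Since $V(t)$ is a multiplication operator, $[V(t),V(s)]=0$ for all $t,s$, and the assumption $\int_0^{2\pi}v(\theta,t)\,\dd t=0$ gives $\bar V=T^{-1}\int_0^T V(t)\,\dd t=0$. Finally, $k>3$ is exactly $p>2$, and since $k\in\N$ we have $\ceil{p-1}=k-2$, so the standing assumption $k>(5-4\alpha)/(2(1-\alpha))$ is precisely condition (\ref{eq:l_bound}), namely $\ceil{p-1}>1/(2(1-\alpha))$.

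It then remains to run Remark~\ref{rema}~ad~(\ref{ad:VtVs}) with $r=p+1=k$, $Y=0$, $Z(t)=V(t)$, $\bar Z=0$: the anti-adiabatic transform produces $V_1(t)\in\YY(p,\gamma)$ with $\|V_1\|_{p,\gamma}\le(C_H/2C_{p+1})\bigl(\exp(4C_{p+1}T\|V\|_{p+1,0})-1\bigr)$, whose right-hand side tends to $0$ as $\|V\|_{p+1,0}=|\varepsilon|\,\|v\|_{k,0}\to0$. Hence there is $\varepsilon_0>0$ such that for $|\varepsilon|<\varepsilon_0$ the norm $\|V_1\|_{p,\gamma}$ lies below the threshold $\varepsilon$ supplied by Theorem~\ref{dyn_stab}; applied to $H+V_1(t)$, that theorem gives that its propagator preserves $Q_H=\Dom(H^{1/2})=\Dom(|p|^{\alpha/2})$ and that the corresponding energy expectation is $O(t^\sigma)$ with $\sigma=2\alpha/(2\ceil{p-1}(1-\alpha)-1)=2\alpha/(2(k-2)(1-\alpha)-1)$, which is (\ref{eq:howland_sigma}). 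Transferring this back to the original dynamics through the time-uniformly bounded unitary intertwiner of the anti-adiabatic transform (as outlined in Remark~\ref{rema}~ad~(\ref{ad:VtVs})) changes $\langle H\rangle_\Psi(t)$ only by a quantity bounded in time; and since $\langle U(t,0)\Psi,V(t)U(t,0)\Psi\rangle$ is itself $O(1)$ because $\|V(t)\|\le|\varepsilon|\,\|v(\cdot,t)\|_\infty$, we conclude $\langle U(t,0)\Psi,H(t)U(t,0)\Psi\rangle=\langle H\rangle_\Psi(t)+O(1)=O(t^\sigma)$, with $Q_H$ preserved.

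I do not expect a serious obstacle: the argument is an assembly of results already in hand. The two points needing a little care, rather than pure bookkeeping, are (i) the initial constant shift, which must be done so that (\ref{VGC}) genuinely holds on the index set $\N$ with $E_1>0$ while the eigenvalue $0$ (the constants on $S^1$) is retained and absorbed harmlessly; and (ii) checking that the smallness required to run Remark~\ref{rema}~ad~(\ref{ad:VtVs}) and the $\varepsilon$-threshold demanded by Theorem~\ref{dyn_stab} can be met simultaneously, which holds because $\varepsilon\mapsto\|V_1\|_{p,\gamma}$ is continuous and vanishes at $\varepsilon=0$. No estimate beyond those contained in the excerpt is needed.
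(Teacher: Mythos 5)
Your argument is correct and follows exactly the route the paper takes: reduce to Theorem~\ref{dyn_stab} via Remark~\ref{rema}~ad~(ii) with $p=k-1$, using the integration-by-parts bound to place $V$ in $\YY(k,0)$, the multiplication-operator structure to get $[V(t),V(s)]=0$, and the zero time-average hypothesis to get $\bar V=0$. You simply spell out details the paper leaves tacit — the harmless shift $H\mapsto H+1$ and relabeling so that (\ref{VGC}) holds with $E_1>0$, the observation that $\ceil{p-1}=k-2$ so (\ref{eq:l_bound}) is exactly $k>(5-4\alpha)/(2(1-\alpha))$, the continuity/vanishing of $\varepsilon\mapsto\|V_1\|_{p,\gamma}$ to meet the smallness threshold, and the $O(1)$ term $\langle V(t)\rangle$ bridging $\langle H\rangle$ and $\langle H(t)\rangle$ — all of which are correct and in the spirit of the paper's one-line proof.
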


Let us summarize that the energy diffusion exponent in the Howland's
model can be made arbitrarily small provided the potential on the
circle is sufficiently smooth and the coupling constant is
sufficiently small.

\subsubsection{Discrete Hamiltonian on the half-line with a slowly
  growing potential}

Discrete models on a lattice are frequently and intensively studied.
Here we are inspired by Example~5.1 in \cite{BJ}. In contrast to it we
restrict ourselves to the usual discrete Laplacian on the half-line
rather than considering a long-range Laplacian on the line. To fit the
assumptions on which the current paper is based, in particular the gap
condition (\ref{VGC}), we further restrict ourselves to slowly growing
discrete potentials $V(n)=n^\alpha$ for some $\alpha$, $0<\alpha<1$.
Thus we do not cover the most interesting linear case $V(n)=n$.

Set $\Hs=l^2(\N)$. Let $\Delta$ be the discrete Laplacian on the
half-line,
\begin{displaymath}
  (\Delta\psi)(1)=\psi(2),\textrm{~}
  (\Delta\psi)(n)=\psi(n-1)+\psi(n+1)\textrm{~~for~}n\geq2.
\end{displaymath}
Further fix a parameter $\alpha$, $0<\alpha<1$, and define
\begin{displaymath}
  (V\psi)(n) = n^\alpha\psi(n),\textrm{~}\forall n.
\end{displaymath}
Let us consider the Hamiltonian $H(t)=-\Delta+\lambda{}a(t)V$ where
$a(t)$ is a $T$-periodic function, $a\in{}C^1$ and $a(t)\geq{}a_0>0$,
$\forall{}t\in\R$, $\lambda>0$ is a coupling constant. Set
\begin{displaymath}
  b(t) = \lambda\int_0^t a(s)\,\dd s,\textrm{~}
  \phi(t) = \frac{1}{\lambda\,a\!\left(b^{-1}(t)\right)}\,,
\end{displaymath}
and
\begin{displaymath}
  H_1(t) = V-\phi(t)\Delta.
\end{displaymath}
Hence $H(t)=\lambda{}a(t)H_1(b(t))$. Observe that $b(t)=O(t)$ and
\begin{displaymath}
  b(t+T)=b(t)+\lambda\kappa\textrm{~~where~}
  \kappa=\int_0^Ta(s)\,\dd{}s.
\end{displaymath}
Hence $b^{-1}(t+\lambda\kappa)=b^{-1}(t)+T$. The function $\phi(t)$ is
readily seen to be $C^1$ and $\lambda\kappa$-periodic. Denote by
$U_1(t,s)$ the propagator associated to $H_1(t)$. Then
$U(t,s)=U_1(b(t),b(s))$ is the propagator associated to $H(t)$.

Now one can apply Theorem~\ref{dyn_stab} to the Hamiltonian $H_1(t)$.
The unperturbed part $V$ is diagonal in the standard basis in
$l^2(\N)$, and the eigen-values obey the gap condition (\ref{VGC}). On
the other hand, the perturbation $-\phi(t)\Delta$ is strongly
differentiable and belongs to $\YY(p,0)$ for all $p\geq1$.
Theorem~\ref{dyn_stab} jointly with Remark~\ref{rema}
ad~(\ref{ad:VtVs}) implies that for any $\sigma>0$ and all initial
conditions $\Psi$,
\begin{displaymath}
  \langle U_1(t,0)\Psi,VU_1(t,0)\Psi\rangle = O(t^\sigma)
\end{displaymath}
as long as $\lambda\geq\lambda_0(\sigma)$ where $\lambda_0(\sigma)$ is
a lower bound depending on $\sigma$. Replacing $t$ by $b(t)$ one finds
that
\begin{displaymath}
  \langle U(t,0)\Psi,VU(t,0)\Psi\rangle = O(t^\sigma).
\end{displaymath}

\section{Derivation of the main result}

\subsection{Two additional theorems}

The proof of Theorem~\ref{dyn_stab} is based on the following two
theorems, Theorem~\ref{thm_aa} and Theorem~\ref{joye}. In what follows
we use the notation $D:=-\im \partial_t$ on the interval $[0,T]$ with
the periodic boundary condition.

\begin{theorem}
  \label{thm_aa}
  Let $K=D+H+V(t)$ be a Floquet Hamiltonian on $L^2([0,T],\Hs)$, with
  $H$ and $V(t)$ satisfying the assumptions of Theorem~\ref{dyn_stab}.
  Let $p>2$ and $q<p-1$ be a natural number. Then there exists
  $\varepsilon>0$ such that $\Vert{}V\Vert_{p,\gamma}\le\varepsilon$
  implies the existence of a $T$-periodic family of unitary operators
  $J(t)$ on $\Hs$ which is continuously differentiable in the strong
  sense and such that
  \begin{displaymath}
    K=J(t)(D+H+A +B(t))J(t)^*
  \end{displaymath}
  where $B(t)\in\YY(p-q,(q+1)\gamma)$ is $T$-periodic, Hermitian and
  strongly continuously differentiable, and $A$ is bounded, symmetric
  and commutes with $H$.
\end{theorem}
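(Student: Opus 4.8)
The plan is to construct $J(t)$ as a finite product of elementary unitary conjugations, performed in $q$ successive \emph{anti-adiabatic steps} (together with a companion family of stationary conjugations), arguing by induction on the number of steps carried out. Assume that after $j$ steps one has
\[
  K = J_j(t)\bigl(D+H+A_j+W_j(t)\bigr)J_j(t)^{*},
\]
with $W_j\in\YY(p-j,(j+1)\gamma)$ $T$-periodic, Hermitian, strongly $C^1$, and $A_j$ bounded, symmetric and commuting with $H$; for $j=0$ this is the hypothesis, with $J_0=\mathrm{id}$, $A_0=0$, $W_0=V$. A step produces the same structure with $j+1$ in place of $j$, and after $q<p-1$ of them one reads off $A:=A_q$, $B:=W_q\in\YY(p-q,(q+1)\gamma)$ and $J:=J_q$. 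Because we iterate only $q<p-1$ times, at every stage $p-j\ge p-q>1$, so by the Schur--Holmgren estimate recorded in the remarks after Definition~\ref{Y(p,a)} every matrix produced represents a genuine bounded operator, and Lemmas~\ref{how},~\ref{product_YY} and Corollary~\ref{lemma_XY} apply throughout.

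For one step, split $W_j=W_j^{\diag}+W_j^{\offdiag}$ into its block-diagonal and block-off-diagonal parts relative to $\{P_n\}$: the diagonal part commutes with $H$ and lies in $\YY(\infty,(j+1)\gamma)$, and is moved into $A$. Split further $W_j^{\offdiag}=\overline{W_j^{\offdiag}}+\widetilde W_j$ into its time-average and a zero-mean remainder. Since $\widetilde W_j$ has vanishing mean, the \emph{anti-adiabatic generator}
\[
  X(t) := -\!\int_0^t\widetilde W_j(s)\,\dd s
\]
is $T$-periodic, Hermitian, strongly $C^1$, block-off-diagonal, and — integration in time not affecting the matrix weights — lies in the \emph{same} class $\YY(p-j,(j+1)\gamma)$ as $W_j^{\offdiag}$, with $\|X\|_{p-j,(j+1)\gamma}\le 2T\,\|W_j\|_{p-j,(j+1)\gamma}$. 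Conjugating $K$ by $e^{\im X}$ and using $e^{-\im X}De^{\im X}=D+\dot X-\tfrac{\im}{2}[X,\dot X]+\cdots$ together with the commutator expansion of $e^{-\im X}(H+A_j+W_j)e^{\im X}$, the choice $\dot X=-\widetilde W_j$ makes $\widetilde W_j$ cancel and leaves
\[
  D+H+A_j+W_j^{\diag}+\overline{W_j^{\offdiag}}\;-\im[X,H]-\im[X,A_j]-\im[X,W_j]+\tfrac{\im}{2}[X,\widetilde W_j]+(\text{higher order in }X).
\]
By Lemma~\ref{how}, $[X,H]\in\YY(p-j-1,(j+2)\gamma)$; by the three product formulas of Corollary~\ref{lemma_XY}, together with the elementary inequalities (\ref{<}) and $\langle a+b\rangle/(\langle a\rangle\langle b\rangle)\le2/\langle\min\{a,b\}\rangle$, the remaining commutators — and also the commutators of $X$ with stationary off-diagonal pieces carried over from earlier steps, whose weaker $\max\{m,n\}$-decay is offset by the stronger decay of $X$ — all land in $\YY(p-j-1,(j+2)\gamma)$, and the Duhamel/Lie series converges in that Banach space once $\|X\|$, hence $\varepsilon$, is small (the threshold deteriorating with $q$ through the constants $C_p$ and the number of steps). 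This disposes of the time-dependent off-diagonal part. The stationary off-diagonal remainder $\overline{W_j^{\offdiag}}$ (and whatever stationary off-diagonal pieces have accumulated) is removed by \emph{time-independent} unitaries $e^{\im Y}$ with $Y:=\ad_H^{-1}(\cdot)$: the stationary equation $[Y,H]=(\text{given off-diagonal operator})$ has no small-denominator obstruction, since the lower bound in (\ref{VGC}) is calibrated exactly to the weight $\max\{m,n\}^{-2\gamma}$, so one gains a factor $\langle m-n\rangle^{-1}$ at the cost of one factor $\max\{m,n\}^{2\gamma}$, and the commutators this conjugation produces are sorted once more, via Corollary~\ref{lemma_XY}, into the diagonal part and $\YY(p-j-1,(j+2)\gamma)$.

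Finally, the corrections accumulated in $A$ form a convergent series of bounded operators commuting with $H$, so $A$ is bounded and symmetric; each generator $X_j$, $Y_j$ is Hermitian, so each $e^{\im X_j}$, $e^{\im Y_j}$ is unitary and $J=\prod_j e^{\im X_j}e^{\im Y_j}$ is a $T$-periodic family of unitaries, strongly $C^1$ because the generators are (one controls $\dot X_j$, $\dot Y_j$ in operator norm along the way); since $J(t)$ and $J(t)^{*}$ are bounded perturbations of the identity with bounded derivatives, they map $\Dom H$ onto itself. I expect the principal difficulty to be exactly the exponent bookkeeping around the \emph{stationary} off-diagonal parts: inverting $\ad_H$ degrades the $\max\{m,n\}$-weight, so keeping everything inside $\YY(p-j-1,(j+2)\gamma)$ at every step — and, most delicately, at the very first step, where this weight is weakest — is what forces $q<p-1$ (the point at which, when $q$ is chosen as large as permitted, the hypothesis $\ceil{p-1}>1/(2(1-\alpha))$ of Theorem~\ref{dyn_stab} enters) and what shows that only one unit of $p$ may be traded for each further factor $\gamma$. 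A secondary, more routine, difficulty is selecting a single smallness threshold $\varepsilon=\varepsilon(p,q)$ making all $q$ commutator series converge at once.
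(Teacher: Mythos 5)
Your architecture is the right one — alternating anti-adiabatic transforms with stationary diagonalizations, performed $q$ times, with a single smallness threshold $\varepsilon(p,q)$ making all the series converge — and the key observation that each anti-adiabatic step trades one unit of $p$ for one extra factor $\gamma$ matches the paper's Proposition~\ref{a-a}. But there is a genuine gap in how you dispose of the stationary off-diagonal remainder. Conjugating by $e^{\im Y}$ with $Y=\ad_H^{-1}(\text{off-diag})$ cancels that remainder only to first order: the higher-order terms $\ad_Y^k(\cdots)$ still have off-diagonal parts, and by Corollary~\ref{lemma_XY} they live in the \emph{same} class $\YY(r,i\gamma)$ as the remainder you started with, only quadratically smaller — they do not automatically drop into $\YY(r-1,(i+1)\gamma)$. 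So the stationary diagonalization is necessarily an \emph{infinite} iteration (the paper's ``progressive diagonalization method'' in Section~\ref{sec:diagonalization}), and one must prove quadratic convergence and show the unitary $U=\lim U_s$ exists in $\Bs(\Hs)$.

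The second, more delicate, point you miss is that at stage $s$ of this iteration the accumulated diagonal correction $G_s$ has shifted the unperturbed levels, so the right equation to solve is $[H+G_s,W_s]=V_s$, not $[H,W_s]=V_s$ (otherwise the residual $[W_s,G_s]$ is only first order in $V_s$ and the scheme does not close). Inverting $\ad_{H+G_s}$ reintroduces a small-denominator problem: one must keep $\|G_s\|_{\infty,\gamma}$ small uniformly in $s$ (condition (\ref{G})) so that the perturbed gaps stay bounded below by $\tfrac{c_H}{2}|m-n|/\max\{m,n\}^{2\gamma}$ (estimate (\ref{small_divisors})), and then one needs the Bhatia–Rosenthal lemma (Lemma~\ref{Bhatia}) rather than a naive eigenvalue-difference bound, because $\Ran P_n$ may be higher-dimensional, so $E_n+(G_s)_{n,n}$ is a Hermitian block, not a scalar. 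Your remark that the lower bound in (\ref{VGC}) is ``calibrated exactly'' is correct as far as $H$ alone is concerned, but it is the uniform-in-$s$ control of the shifted spectrum that actually makes the scheme converge, and that is where the smallness hypothesis (\ref{normbarZ}) enters. Finally, a cosmetic note: the paper splits off the full time average $\bar Z$ (diagonal and off-diagonal together) before diagonalizing, whereas you split off the diagonal first and then the time average of the off-diagonal — this is a harmless reorganization, not a gap.
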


The remainder of the current paper is concerned with the proof of
Theorem~\ref{thm_aa}. Theorem~\ref{joye} to follow is a mere
modification of Proposition~5.1 in \cite{J2} in combination with some
ideas from \cite[Section~2]{BJ}. This is why we present its proof in a
rather sketchy form. Let us also note that the basic idea standing
behind the estimates goes back to Nenciu \cite{N2}.

\begin{theorem}
  \label{joye}
  Let $H$ be a positive operator with a pure point spectrum and the
  spectral decomposition $H=\sum_{n}E_nP_n$. Assume that the
  eigen-values $0<E_1<E_2<\dots$ satisfy $E_n=O(n^\alpha)$, with
  $\alpha>0$. Set $Q_n=1-P_n$. Let an operator-valued function
  $W(t)\in\Bs(\Hs)$ be Hermitian, $C^1$ in the strong sense and such
  that
  \begin{displaymath}
    \forall n\in\N,\textrm{~~}
    \Vert P_nW(t)Q_nH^{-1/2}\Vert
    \leq \frac\const{n^{\mu+\frac{\alpha}{2}}}
  \end{displaymath}
  uniformly in time for some $\mu>1/2$. Then the propagator $U(t,s)$
  associated with $H+W(t)$ preserves $Q_{H}$, the form domain of $H$,
  and for every $\Psi$ from $Q_{H}$,
 \begin{displaymath}
   \langle U(t,0)\Psi,HU(t,0)\Psi\rangle = O(t^{2\alpha/(2\mu-1)}).
 \end{displaymath}
\end{theorem}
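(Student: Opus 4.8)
The plan is to control the rate at which probability leaks ``upward'' through the spectral subspaces of $H$. Let $U(t,s)$ be the propagator associated to $H+W(t)$ and fix $\Psi\in Q_H$. For $n\in\N$ introduce the ``tail weight'' $N_n(t) := \langle U(t,0)\Psi, Q_n U(t,0)\Psi\rangle = \Vert Q_n U(t,0)\Psi\Vert^2$, which measures how much of the state sits above the $n$-th eigenvalue. Differentiating in $t$ and using that $Q_n$ commutes with $H$, only the perturbation contributes:
\begin{displaymath}
  \dot N_n(t) = -2\,\mathrm{Im}\,\langle Q_n U(t,0)\Psi,\, W(t)\, U(t,0)\Psi\rangle
  = -2\,\mathrm{Im}\,\langle Q_n U(t,0)\Psi,\, W(t)(Q_n+P_n)\, U(t,0)\Psi\rangle .
\end{displaymath}
The term with $Q_n$ on both sides is $\mathrm{Im}$ of a quantity that is real up to the anti-Hermitian part, and since $W$ is Hermitian $\langle Q_n\varphi, W Q_n\varphi\rangle$ is real, so it drops out; one is left with the cross term $-2\,\mathrm{Im}\,\langle Q_n U\Psi, W P_n U\Psi\rangle$, i.e. the flux is driven only by the matrix block $P_n W Q_n$ connecting the two halves. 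Here I would insert $H^{-1/2}H^{1/2}$ and invoke the hypothesis $\Vert P_n W(t) Q_n H^{-1/2}\Vert \le \const\, n^{-\mu-\alpha/2}$ together with $E_n = O(n^\alpha)$ to bound
\begin{displaymath}
  |\dot N_n(t)| \le 2\,\Vert Q_n U\Psi\Vert\,\Vert P_n W(t) Q_n H^{-1/2}\Vert\,\Vert H^{1/2} P_n U\Psi\Vert
  \le \frac{\const}{n^{\mu+\alpha/2}}\,N_n(t)^{1/2}\, E_n^{1/2}\Vert\Psi\Vert
  \le \frac{\const}{n^{\mu}}\, N_n(t)^{1/2}\,\Vert\Psi\Vert .
\end{displaymath}

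Integrating this differential inequality is the heart of the argument. From $|\dot N_n| \le C n^{-\mu} N_n^{1/2}$ and $N_n(0) = \Vert Q_n \Psi\Vert^2$ one gets, by the standard comparison $\frac{d}{dt}\sqrt{N_n} \le \tfrac12 C n^{-\mu}$,
\begin{displaymath}
  N_n(t)^{1/2} \le N_n(0)^{1/2} + \frac{C\,\Vert\Psi\Vert}{2}\,\frac{t}{n^{\mu}} .
\end{displaymath}
Now reconstruct the energy from the tails: writing $E_n^{\,} $ in the form of a telescoping/Abel summation one has, up to the convergent first contributions,
\begin{displaymath}
  \langle U(t,0)\Psi, H U(t,0)\Psi\rangle \;\approx\; \sum_{n\ge 1}(E_{n+1}-E_n)\, N_n(t)
\end{displaymath}
(the partial sums are monotone and the bound below will show the series converges). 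Insert the above pointwise bound, $N_n(t) \le 2N_n(0) + \tfrac12 C^2\Vert\Psi\Vert^2 t^2 n^{-2\mu}$, use $E_{n+1}-E_n = O(n^{\alpha-1})$, and split the sum at the cutoff $n_0 \sim t^{1/\mu}$: for $n\le n_0$ one crudely bounds $N_n(t)\le\Vert\Psi\Vert^2$ giving a contribution $O(n_0^{\alpha}) = O(t^{\alpha/\mu})$, while for $n > n_0$ the quadratic term dominates and gives $\const\, t^2 \sum_{n>n_0} n^{\alpha-1-2\mu}$, which (since $\mu>1/2$ makes the exponent $<-1$ once combined with $\alpha<1$... more carefully one needs $\alpha-2\mu<0$, true as $\mu>1/2>\alpha/2$ is not automatic, so one keeps the $t^2 n_0^{\alpha-2\mu} = t^2 t^{(\alpha-2\mu)/\mu}$ term) equals $\const\, t^{2 + (\alpha-2\mu)/\mu} = \const\, t^{\alpha/\mu} $. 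Hence both pieces are $O(t^{\alpha/\mu})$.

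Wait --- the claimed exponent is $2\alpha/(2\mu-1)$, not $\alpha/\mu$, so the cutoff must be chosen more cleverly. The correct optimization balances the two terms: the low part is $O(n_0^\alpha)$ and, keeping the full bound $N_n(t)\le \const\min\{1, t^2 n^{-2\mu}\}$, the high part summed against $n^{\alpha-1}$ is $O(t^2 n_0^{\alpha-2\mu})$ only if that exponent is negative; balancing $n_0^{\alpha} \asymp t^2 n_0^{\alpha - 2\mu}$ gives $n_0 \asymp t^{1/\mu}$ and both sides $\asymp t^{\alpha/\mu}$, which is stronger than the stated $t^{2\alpha/(2\mu-1)}$ whenever $\mu<1$... so in fact the honest bound must come from not using $N_n\le\Vert\Psi\Vert^2$ on the low block but rather iterating: on $[0,t]$ the weight $N_n$ itself feeds back, and a Gronwall-type argument on the quantity $\sum(E_{n+1}-E_n)N_n(t)$ directly --- differentiate it, bound $\dot{}$ by $\const\, \big(\sum(E_{n+1}-E_n)N_n\big)^{1/2}$ times a time-independent constant coming from $\sum n^{\alpha-1} n^{-2\mu+\alpha} <\infty$ when $2\mu-\alpha>1$, wait that needs $\mu > (1+\alpha)/2$. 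The robust route, and the one I would actually carry out, is the moment method of Nenciu: show $\frac{d}{dt}\langle H\rangle^{1/2} \le \const\,\langle H\rangle^{(\mu-1/2)/... }$ type sublinear bound obtained by a weighted Cauchy--Schwarz across the cut, yielding $\langle H\rangle(t)^{(2\mu-1)/(2\alpha)} \le \const(1+t)$, hence $\langle H\rangle(t) = O(t^{2\alpha/(2\mu-1)})$.

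The main obstacle, and where I would spend the most care, is precisely this last interpolation: getting the exponent $2\alpha/(2\mu-1)$ (rather than the weaker $\alpha/\mu$) requires not treating the low-frequency block trivially but estimating $\sum_{n\le n_0}(E_{n+1}-E_n)N_n(t)$ in terms of $\langle H\rangle(t)$ itself --- e.g. via $\sum_{n\le n_0}(E_{n+1}-E_n) \le E_{n_0+1} = O(n_0^\alpha)$ is too lossy, whereas bounding $\dot N_n \le C n^{-\mu} N_n^{1/2}$ and then Cauchy--Schwarz in $n$ against the spectral measure gives $\frac{d}{dt}\langle H\rangle \le \const\,\big(\sum_n E_{n+1}^{2} n^{-2\mu}/(E_{n+1}-E_n)^{?}\big)^{1/2}\langle H\rangle^{1/2}$-type expression whose finite-sum truncation at the ``energy horizon'' $n\sim \langle H\rangle^{1/\alpha}$ produces the factor $\langle H\rangle^{(\alpha-2\mu+\alpha)/(2\alpha)+1/2}$, and solving the resulting ODE $\langle H\rangle' \le \const\,\langle H\rangle^{1 - (2\mu-1)/(2\alpha)}$ gives exactly $\langle H\rangle(t) = O(t^{2\alpha/(2\mu-1)})$. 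The condition $\mu>1/2$ is exactly what makes the relevant series converge and the ODE exponent strictly below $1$ so that the solution grows only polynomially. The preservation of the form domain $Q_H$ follows along the way: the above estimates show $\langle H\rangle_\Psi(t)$ is finite for all $t$ whenever it is finite at $t=0$, which is the statement that $U(t,0)$ maps $Q_H$ into $Q_H$; applying the same to $U(0,t)=U(t,0)^{-1}$ (governed by a time-reversed Hamiltonian of the same class) gives that it maps onto $Q_H$.
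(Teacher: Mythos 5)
Your proposal does not close. The central step fails: from
$\dot N_n(t)=-2\,\mathrm{Im}\,\langle P_nU(t,0)\Psi, P_nW(t)Q_nU(t,0)\Psi\rangle$, the only place the hypothesis lets you insert $H^{-1/2}H^{1/2}$ is on the $Q_n$ side, i.e.\ $P_nW Q_n=\bigl(P_nWQ_nH^{-1/2}\bigr)H^{1/2}$, which yields
$|\dot N_n(t)|\le 2\,\Vert P_nU\Psi\Vert\,\Vert P_nWQ_nH^{-1/2}\Vert\,\Vert H^{1/2}U(t,0)\Psi\Vert$,
not the quantity $\Vert H^{1/2}P_nU\Psi\Vert\le E_n^{1/2}\Vert\Psi\Vert$ you wrote. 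To obtain your factor you would need a bound on $\Vert Q_nWP_nH^{-1/2}\Vert=E_n^{-1/2}\Vert P_nWQ_n\Vert$, which is a different (and not assumed) quantity. Consequently the differential inequality $|\dot N_n|\lesssim n^{-\mu}N_n^{1/2}\Vert\Psi\Vert$ does not close in $N_n$ alone: the right-hand side inevitably contains $\Vert H^{1/2}U(t,0)\Psi\Vert$, the very quantity you are trying to estimate. This is precisely the source of your spurious and ``too strong'' exponent $\alpha/\mu$; you notice that something is off, but you do not locate the error, and the subsequent attempts (Gronwall on $\sum(E_{n+1}-E_n)N_n$, ``moment method of Nenciu,'' the ODE $\langle H\rangle'\lesssim\langle H\rangle^{1-(2\mu-1)/(2\alpha)}$) are stated as hopes rather than derived. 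There is also a secondary problem: the theorem sets $Q_n=1-P_n$, a \emph{rank-one-complement} projection, not a spectral tail $\sum_{k>n}P_k$; with the actual $Q_n$ one has $N_n(t)=\Vert\Psi\Vert^2-\Vert P_nU\Psi\Vert^2$, so the telescoping/Abel representation $\langle H\rangle\approx\sum_n(E_{n+1}-E_n)N_n$ you use is not valid.

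What actually makes the argument close is to make the self-reference explicit and then bootstrap. The paper compares $U(t,0)$ to the propagator $U_d(t,0)$ of $H+W_d(t)$ (the $P_n$-diagonal part), which commutes with every $P_n$. Duhamel gives $\Vert P_n(U(t',0)-U_d(t',0))\Psi\Vert\le t\cdot\sup_{s\le t}\Vert P_nW(s)Q_nH^{-1/2}\Vert\cdot\sup_{s\le t}\Vert H^{1/2}U(s,0)\Psi\Vert$. Plugging this into $\Vert H^{1/2}U(t',0)\Psi\Vert^2=\sum_nE_n\Vert P_nU(t',0)\Psi\Vert^2$, truncating the sum at a free parameter $N$, and using $E_n=O(n^\alpha)$ together with $\sum_{n>N}n^{-2\mu}=O(N^{1-2\mu})$ yields
$\bigl(1-c_1t^2N^{1-2\mu}\bigr)\sup_{0\le s\le t}\Vert H^{1/2}U(s,0)\Psi\Vert^2\le c_2N^\alpha\Vert\Psi\Vert^2+2\Vert H^{1/2}\Psi\Vert^2$.
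Choosing $N\sim t^{2/(2\mu-1)}$ makes the prefactor on the left bounded away from zero and gives $\sup_{s\le t}\Vert H^{1/2}U(s,0)\Psi\Vert^2=O(t^{2\alpha/(2\mu-1)})$ for $\Psi\in\Dom(H)$, then density extends this to $\Dom(H^{1/2})$. The exponent $2\alpha/(2\mu-1)$ thus comes from the balance between the $N^\alpha$ term and the bootstrap condition $t^2N^{1-2\mu}\lesssim1$, not from a self-improving ODE for $\langle H\rangle$. Your approach could in principle be rescued by writing the correct inequality with $\Vert H^{1/2}U\Psi\Vert$ on the right and then taking a supremum over $s\in[0,t]$ and bootstrapping as above; as written, the key estimate is false.
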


\begin{rem*}
  The bound on the energy expectation value is nontrivial if
  $\mu>\frac{1}{2}+\alpha$.
\end{rem*}

\begin{proof}
  Let
  \begin{displaymath}
    W_d(t) := \sum_{n=1}^\infty P_nW(t)P_n
  \end{displaymath}
  be the diagonal part of $W(t)$. It is straightforward to see that
  $W_d(t)$ is again $C^1$ in the strong sense. Let $U_d(t,s)$ be the
  propagator associated to $H+W_d(t)$. Since $W_d(t)$ commutes with
  $H$ the same if true for $U_d(t,s)$. Equivalently this means that
  $U_d(t,s)$ commutes with all projectors $P_n$. From the Duhamel's
  formula we have
  \begin{displaymath}
    R(t) := U(t,0)-U_d(t,0) = -\im\int_0^t
    U_d(t,s)\big(W(s)-W_d(s)\big)U(s,0)\,\dd s.
  \end{displaymath}
 
  Fix $t>0$ and choose $\Psi\in\Dom(H)\subset\Dom(H^{1/2})$. Notice
  that $P_n\big(W(s)-W_d(s)\big)=P_nW(s)Q_n$. For any $t'$,
  $0\leq{}t'\leq{}t$, it holds
  \begin{displaymath}
    \|H^{1/2}U(t',0)\Psi\|^2
    = \sum_{n=1}^\infty E_n\|P_nU(t',0)\Psi\|^2
    \leq E_N\|\Psi\|^2+\sum_{n=N+1}^\infty E_n\|P_nU(t',0)\Psi\|^2.
 \end{displaymath}
  Furthermore,
  \begin{displaymath}
    \|P_nU(t',0)\Psi\|^2 \leq 2(\|P_n\Psi\|^2+\|P_nR(t')\Psi\|^2)
  \end{displaymath}
  and
  \begin{eqnarray*}
    \|P_nR(t')\Psi\| &\leq& \int_0^t\|P_nW(s)Q_nH^{-1/2}\|\,\dd s\,
    \sup_{0\leq s\leq t}\|H^{1/2}U(s,0)\Psi\| \\
    &\leq& \frac{c\,t}{n^{\mu+\frac{\alpha}{2}}}\,
    \sup_{0\leq s\leq t}\|H^{1/2}U(s,0)\Psi\|.
  \end{eqnarray*}

  From these estimates one concludes that for any $t>0$, all
  $\Psi\in\Dom(H)$, $N\in\N$ and some positive constants $c_1$, $c_2$
  independent of $t$, $\Psi$ and $N$ it holds
  \begin{displaymath}
    \left(1-\frac{c_1t^2}{N^{2\mu-1}}\right)
    \sup_{0\leq s\leq t}\|H^{1/2}U(s,0)\Psi\|^2
    \leq c_2N^\alpha\|\Psi\|^2+2\|H^{1/2}\Psi\|^2.
  \end{displaymath}
  Setting $N=[Ct^{2/(2\mu-1)}]$ where $C>0$ is a sufficiently large
  constant one deduces that there exists $c_3>0$ such that it holds
  \begin{equation}
    \label{eq:HhalfUPsi}
    \|H^{1/2}U(t,0)\Psi\|^2 \leq c_3
    \left(t^{2\alpha/(2\mu-1)}\|\Psi\|^2+\|H^{1/2}\Psi\|^2\right)
  \end{equation}
  for all $t\geq1$ and $\Psi\in\Dom(H)$.

  One can extend the validity of (\ref{eq:HhalfUPsi}) to
  $\Psi\in\Dom(H^{1/2})$. To this end it suffices to use the fact that
  $\Dom(H^{1/2})$ is a Banach space with respect to the norm
  \hspace{1em}
  $\|\Psi\|_\ast=\linebreak
  (\|\Psi\|^2+\|H^{1/2}\Psi\|^2)^{1/2}$,
  and $\Dom(H)\subset\Dom(H^{1/2})$ is a dense subspace. Choosing
  $\Psi\in\Dom(H^{1/2})$ one can find a sequence $\{\Psi_k\}$ in
  $\Dom(H)$ such that $\Psi_k\to\Psi$ in $\Dom(H^{1/2})$. Then
  (\ref{eq:HhalfUPsi}) implies that $\{U(t,0)\Psi_k\}$ is a Cauchy
  sequence in $\Dom(H^{1/2})$ whose limit necessarily equals
  $U(t,0)\Psi$. Hence $\Dom(H^{1/2})$ is $U(t,0)$--invariant and
  (\ref{eq:HhalfUPsi}) is valid also for all $\Psi\in\Dom(H^{1/2})$.
  This concludes the proof.
\end{proof}

\subsection{Proof of Theorem~\ref{dyn_stab}}
\label{sec:proof_thm_dyn_stab}

Here we show how Theorem~\ref{dyn_stab} follows from
Theorem~\ref{thm_aa} and Theorem~\ref{joye}.

\begin{lemma}
  \label{thm:fromYtomu}
  Assume that $H$ is a positive operator with a pure point spectrum
  and the spectral decomposition $H=\sum_{n=1}^\infty{}E_nP_n$, and
  such that the eigen-values satisfy $\inf{}E_nn^{-\alpha}>0$, with
  $\alpha>0$. Set $Q_n=1-P_n$.  Then for any $p\geq1$ there exist a
  constant $c(p,\alpha)>0$ such that for all $\delta>0$,
  \begin{displaymath}
    \forall B\in\YY(p,\delta),\forall n\in\N,\textrm{~~}
    \|P_nBQ_nH^{-1/2}\| \leq c(p,\alpha)\,
    \frac{\|B\|_{p,\delta}}{n^{2\delta+\frac{\alpha}{2}}}\,.
  \end{displaymath}
\end{lemma}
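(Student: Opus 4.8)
The plan is to estimate $\|P_nBQ_nH^{-1/2}\|$ by decomposing $Q_nH^{-1/2} = \sum_{m\neq n} P_mH^{-1/2}$ and using that $H^{-1/2}$ acts as the scalar $E_m^{-1/2}$ on $\Ran P_m$. Since $\inf_m E_m m^{-\alpha} =: c_0 > 0$, we have $\|P_mH^{-1/2}\| = E_m^{-1/2} \leq c_0^{-1/2} m^{-\alpha/2}$ for every $m$. Hence, using that the blocks $P_mH^{-1/2}$ have orthogonal ranges and mutually orthogonal domains (so the operator norm of the sum is controlled by the Schur/triangle-type bound on the single nonzero block in each row), one gets
\begin{displaymath}
  \|P_nBQ_nH^{-1/2}\| \leq \sum_{m\neq n}\|P_nBP_m\|\,E_m^{-1/2}
  \leq c_0^{-1/2}\|B\|_{p,\delta}\sum_{m\neq n}
  \frac{1}{\langle m-n\rangle^p\max\{m,n\}^{2\delta}\,m^{\alpha/2}}\,.
\end{displaymath}
So the whole statement reduces to showing that the last sum is bounded by $c(p,\alpha)\,n^{-2\delta-\alpha/2}$, uniformly in $\delta>0$ and $n\in\N$.

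The main work is thus the elementary estimate of $S_n := \sum_{m\neq n}\langle m-n\rangle^{-p}\max\{m,n\}^{-2\delta} m^{-\alpha/2}$. I would split it into $m<n$ and $m>n$. For $m>n$ one has $\max\{m,n\}=m$, so the terms are $\langle m-n\rangle^{-p} m^{-2\delta-\alpha/2} \leq n^{-2\delta-\alpha/2}\langle m-n\rangle^{-p}$ since $m\geq n$ and the exponent $2\delta+\alpha/2$ is positive; summing over $m>n$ contributes at most $n^{-2\delta-\alpha/2}\zeta(p)$ (using $p>2>1$). For $m<n$ one has $\max\{m,n\}=n$, so the terms are $n^{-2\delta}\langle m-n\rangle^{-p} m^{-\alpha/2}$; here I would use inequality (\ref{<}), namely $n/m \leq 2\langle m-n\rangle$, in the form $m^{-\alpha/2} \leq (2\langle m-n\rangle)^{\alpha/2} n^{-\alpha/2}$, to pull out the extra factor $n^{-\alpha/2}$ at the cost of $2^{\alpha/2}\langle m-n\rangle^{\alpha/2}$; since $p - \alpha/2 > 2 - 1/2 > 1$, the residual sum $\sum_{k\geq 1} k^{-(p-\alpha/2)}$ converges and is bounded by $\zeta(p-1)$ (say). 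Combining the two ranges gives $S_n \leq c(p,\alpha)\, n^{-2\delta-\alpha/2}$ with $c(p,\alpha)$ depending only on $p$ and $\alpha$ (e.g.\ $c(p,\alpha) = \zeta(p) + 2^{\alpha/2}\zeta(p-1)$), uniformly in $\delta > 0$. Folding in the factor $c_0^{-1/2}$ yields the claimed bound.

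The only point requiring any care — and the closest thing to an obstacle — is the $m<n$ range, where one must genuinely gain the factor $n^{-\alpha/2}$ from a sum that at first sight only carries $n^{-2\delta}$; this is exactly what (\ref{<}) is for, and one must check that the exponent $p$ stays above $1$ after losing $\alpha/2$ to that trick, which it does since $p>2$ and $\alpha<1$. Everything else is bookkeeping: verifying that $\|P_mH^{-1/2}\|=E_m^{-1/2}$, that the block structure lets one sum operator norms row-wise, and that all constants can be chosen independently of $n$ and $\delta$.
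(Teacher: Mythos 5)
There is a genuine gap, and it comes from the choice of estimate for the operator norm of the block row. You bound $\|P_nBQ_nH^{-1/2}\|$ by the plain triangle-inequality ($\ell^1$) sum $\sum_{m\neq n}\|B_{n,m}\|\,E_m^{-1/2}$. That inequality is correct but wasteful: since the subspaces $\Ran P_m$ are mutually orthogonal, Cauchy--Schwarz applied along the row gives the stronger Bessel-type ($\ell^2$) estimate
\begin{displaymath}
  \|P_nBQ_nH^{-1/2}\|^2 \;\le\; \sum_{m\ne n}\frac{\|B_{n,m}\|^2}{E_m}\,,
\end{displaymath}
and this is what the paper starts from. Passing to $\ell^2$ doubles every exponent in the sum to be controlled, turning $\langle m-n\rangle^{-p}$ into $\langle m-n\rangle^{-2p}$ and $m^{-\alpha/2}$ into $m^{-\alpha}$, and that extra margin is exactly what is needed.

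The loss is fatal in your $m<n$ range. After applying (\ref{<}) you are left with $\sum_k k^{\alpha/2-p}$ and you invoke ``$p-\alpha/2>2-1/2>1$''. But $p>2$ is the standing assumption of Theorem~\ref{dyn_stab}, \emph{not} of this lemma, which is stated for every $p\ge1$. Worse, in the proof of Theorem~\ref{dyn_stab} the lemma is actually invoked with the \emph{reduced} exponent $p-q$, $q=\ceil{p-2}$, which lies in $(1,2]$ and can be arbitrarily close to $1$ (for instance $p=3.1$ gives $p-q=1.1$). For such values $p-\alpha/2$ may be $\le1$, so $\sum_k k^{\alpha/2-p}$ diverges; and the $\zeta(p-1)$ you offer as an upper bound is infinite whenever $p\le2$. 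So your argument breaks precisely in the regime where the lemma is used. At the $\ell^2$ level the same two-range split combined with (\ref{<}) only requires $2p-\alpha>1$ in the range $m<n$ and $2p>1$ in the range $m>n$, both of which hold for all $p\ge1$ and $\alpha\in\,]0,1[$; the paper instead splits into the three ranges $1\le m<n/2$, $n/2\le m<n$, $m>n$ and uses $\alpha<1$ in the first, but the substance is the same. In short, the bookkeeping of your proposal is fine; the missing ingredient is the Cauchy--Schwarz step at the outset, without which the stated lemma (and its actual application) is out of reach.
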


\begin{proof}
  Suppose that $B\in\YY(p,\delta)$. By the assumptions,
  $E_n\geq{}c\,n^\alpha$ for all $n$ and some $c>0$. We have
  \begin{displaymath}
    \|P_nBQ_nH^{-1/2}\|^2
    \leq \sum_{m,m\not=n}\frac{\|B_{n,m}\|^2}{E_m}
    \leq \frac{1}{c}\sum_{m,m\not=n}
    \frac{\|B\|_{p,\delta}^{\,2}}
    {|m-n|^{2p}\max\{m,n\}^{4\delta}m^\alpha}\,.
  \end{displaymath}
  Now one splits the range of summation in $m$ into three segments:
  $1\leq{}m<n/2$, $n/2\leq{}m<n$ and $n<m$. For each case one can
  apply elementary and rather obvious estimates to show that the
  expression decays in $n$ at least as $n^{-4\delta-\alpha}$. In the
  first case one has to use the fact that $\alpha<1$. We omit the
  details.
\end{proof}

\begin{proof}[Proof of Theorem~\ref{dyn_stab}]
  Theorem~\ref{thm_aa}, with $q:=\ceil{p-2}$, implies the existence of a
  transformation
  \begin{equation}
    \label{aa_prechod}
    K = J(t)(D+H+A+B(t))J(t)^\ast
  \end{equation}
  where $A$ is bounded and diagonal and $B(t)\in\YY(p-q,(q+1)\gamma)$.
  Since $p>2$ and $q=\ceil{p-2}$ we have $q\geq1$ and $p-q>1$. Set
  $W(t):=A+B(t)$. Then $P_nW(t)Q_n=P_nB(t)Q_n$. The gap condition
  (\ref{VGC}) guarantees that the assumptions of
  Lemma~\ref{thm:fromYtomu} are satisfied and thus one finds that
  \begin{displaymath}
    \|P_nW(t)Q_nH^{-1/2}\| = \|P_nB(t)Q_nH^{-1/2}\|
    \leq \const\cdot{}n^{-\mu-\frac{\alpha}{2}},
  \end{displaymath}
  with $\mu=2(q+1)\gamma=\ceil{p-1}(1-\alpha)$. Notice that assumption
  (\ref{eq:l_bound}) means that $\mu>1/2$. In virtue of
  Theorem~\ref{joye}, the propagator $\tilde{U}(t,s)$ associated to
  $H+W(t)$ maps the form domain $Q_{H}$ onto itself and fulfills
  \begin{equation}
    \label{eq:tilde_average}
    \langle\tilde U(t,0)\tilde\Psi, H\tilde U(t,0)\tilde\Psi\rangle
    = O(t^\sigma),\textrm{~with~}
    \sigma = \frac{2\alpha}{2\ceil{p-1}(1-\alpha)-1}\,,
  \end{equation}
  for every $\tilde\Psi\in{}Q_{H}$.

  Equality (\ref{aa_prechod}) implies that
  \begin{equation}
    \label{beda}
    H + V(t) = J(t) H J(t)^*+\im\dot{J}(t)J(t)^*+J(t)W(t)J(t)^*.
  \end{equation}
  Since the family $J(t)$ is known to be continuously differentiable
  in the strong sense it follows from the uniform boundedness
  principle that the derivative $\dot{J}(t)$ is a bounded operator.
  Moreover, using the periodicity and applying the uniform boundedness
  principle once more one finds that $\|\dot{J}(t)\|$ is bounded
  uniformly in $t$. Hence all operators occurring in equality
  (\ref{beda}), except of $H$, are bounded. One deduces from
  (\ref{beda}) that $J(t)$ maps $\Dom{}H$ onto itself for every $t$
  and that the same is also true for the form domain. Set
  $U(t,s):=J(t)\tilde U(t,s)J(s)^*$.  Then $U(t,s)$ is the propagator
  corresponding to $H+V(t)$. For any $\Psi\in{}Q_{H}$ we have
  \begin{eqnarray*}
    \langle H \rangle_\Psi(t)
    &=&\langle U(t,0)\Psi,H\,U(t,0)\Psi\rangle
    = \langle U(t,0)\Psi, J(t) H J(t)^*U(t,0)\Psi\rangle + O(1)\\
    &=& \langle \tilde U(t,0) \tilde \Psi,  H \tilde U(t,0)
    \tilde\Psi\rangle
    + O(1) = O(t^\sigma)
  \end{eqnarray*}
  where $\tilde\Psi:=J(0)^*\Psi$. This proves the theorem.
\end{proof}

\subsection{The idea of the proof of Theorem~\ref{thm_aa}}
\label{sec:idea}

It remains to prove Theorem~\ref{thm_aa}. The proof is somewhat
lengthy and the remainder of the paper is devoted to it. Let us
explain the main idea. The proof combines the anti-adiabatic
transformation due to Howland (see Section~\ref{section_aa}) with a
(properly modified) diagonalization method, as presented in
\cite{DLSV} (see Section~\ref{sec:diagonalization}). This procedure is
applied repeatedly until achieving the required properties of the
perturbation. Let us describe one step in this approach when starting
from the Floquet Hamiltonian
\begin{displaymath}
  K_\triangle := D+H+Y+Z(t)
\end{displaymath}
where $Y\in\YY(\infty,\gamma)$ is Hermitian and diagonal (i.e.,
commuting with $H$) and $Z(t)\in\YY(r,i\gamma)$ is symmetric,
$T$-periodic and strongly $C^1$. The parameters are supposed to
satisfy $i\ge1$, $r>2$.

Firstly, using the anti-adiabatic transform we try to improve the
decay of entries of $Z(t)$ along the main diagonal when paying for it
by a worse decay of elements in the direction perpendicular to the
diagonal. In more detail, we would like to transform
$Z(t)\in\YY(r,i\gamma)$ into
$Z_\diamondsuit(t)\in\YY(r-1,(i+1)\gamma)$. Unfortunately, we are not
able to get rid of the extra term $\bar{Z}\in\YY(r,i\gamma)$, the time
average of $Z(t)$. The anti-adiabatic transform can be schematically
described as
\begin{displaymath}
  K_\triangle = D+H+Y+Z(t)\rightarrow K_\diamondsuit
  = D+H+Y+\bar Z+Z_\diamondsuit(t).
\end{displaymath}
To cope with the unwanted extra term we apply afterwards a
diagonalization procedure which in fact means the transform
\begin{displaymath}
  K_\diamondsuit = D+H+Y+\bar Z+Z_\diamondsuit(t)
  \rightarrow K_\heartsuit := D+H+A+B(t)
\end{displaymath}
where $A$ and $B(t)$ already have the desired properties, i.e.,
$B(t)\in\YY(r-1,(i+1)\gamma)$ is symmetric, $T$-periodic and strongly
$C^1$, and $A\in\YY(\infty,\gamma)$ is Hermitian and commuting with
$H$.

\section{The anti-adiabatic transform}
\label{section_aa}
 
In this section we adapt the strategy of Howland \cite{H3} and make
precise the mapping $K_\triangle\rightarrow{}K_\diamondsuit$, as
announced in Subsection~\ref{sec:idea}. Using the anti-adiabatic
transform, i.e., roughly speaking, by applying the commutator with $H$
one can improve the decay of matrix entries of the perturbation along
the main diagonal at the expense of a slower decay in the direction
perpendicular to the diagonal. Using the language of classes
$\YY(p,\delta)$, the anti-adiabatic transform may be viewed as passing
from a perturbation $Z(t)\in\YY(p,\delta)$ to a new perturbation
$Z_1(t)\in\YY(p-1,\delta+\gamma)$ where $\gamma$ comes from the gap
condition (\ref{VGC}) (see Lemma~\ref{how}).

Let us introduce the transform in detail. Let $K_\triangle$ be a
Floquet Hamiltonian of the form
\begin{displaymath}
  K_\triangle = D+H+Y+Z(t),
\end{displaymath}
with $H$ satisfying the assumptions of Theorem~\ref{dyn_stab},
$Y\in\YY(\infty,\gamma)$ being Hermitian and commuting with $H$, and
$Z(t)\in\YY(r,i\gamma)$ being Hermitian, $T$-periodic and continuous
in the strong sense. By the uniform boundedness principle, $\|Z(t)\|$
is bounded uniformly in $t$. The parameters are supposed to satisfy
$r>2$, $i\ge1$. Set
\begin{displaymath}
  \bar{Z} := \frac{1}{T}\int_0^TZ(t)\,\dd{}t,\textrm{~~}
  \tilde{Z}(t) = Z(t)- \bar{Z}.
\end{displaymath}
Define
\begin{displaymath}
  F(t):=  \int_0^t \tilde Z(s)\,\dd s,
\end{displaymath}
so that $F(t)$ is Hermitian, $T$-periodic, strongly $C^1$ and lying in
$\YY\left(r,i\gamma\right)$. Let us define $K_\diamondsuit$ by the
gauge-type transformation of $K_\triangle$,
\begin{displaymath}
  K_\diamondsuit:=e^{\im F(t)} K_\triangle e^{-\im F(t)}
  = D+H+Y+\bar Z +Z_\diamondsuit(t) ,
\end{displaymath}
with
\begin{equation}
 \label{Z_kara}
 Z_\diamondsuit(t)=e^{\im F(t)}\left(D+H+Y+Z(t)\right)e^{-\im F(t)}
 -\left(D+H+Y+\bar Z \right).
\end{equation}

The main result related to the anti-adiabatic transform is as follows.

\begin{proposition}
  \label{a-a}
  Let $r>2$, $i\ge1$, $\gamma\in\,]0,\frac12[$, and $H$ be a
  self-adjoint operator with a pure point spectrum and the spectral
  decomposition $H=\sum_{n}E_nP_n$. Assume that the eigen-values
  $\{E_n\}_{n=1}^\infty$ are ordered increasingly and satisfy the
  inequality
  \begin{displaymath}
    |E_m-E_n| \leq C_H\frac{|m-n|}{\max\{m,n\}^{2\gamma}}.
  \end{displaymath}
  Furthermore, $Y$ and $Z(t)$ obey the assumptions formulated above.
  
  Then $Z_\diamondsuit(t)$ defined in (\ref{Z_kara}) is $T$-periodic,
  continuous in the strong sense, Hermitian, and lies in
  $\YY\left(r-1,(i+1)\gamma\right)$. The norm of $Z_\diamondsuit$
  obeys the bound
  \begin{equation}
    \label{Z_diamond_estim}
    \Vert Z_\diamondsuit\Vert_{r-1,(i+1)\gamma}
    \leq \frac{\exp\!\left(4C_rT\,\Vert Z\Vert_{r,i\gamma}\right)-1}
    {2C_r}
    \left(C_H+4\Vert Y\Vert_{\infty,\gamma}
      +2C_r\Vert Z\Vert_{r,i\gamma}\right),
  \end{equation}
  with the constant $C_r$ defined in (\ref{norm_const}). The
  operator-valued function $e^{\im F(t)}$ is $C^1$ in the strong
  sense. Moreover, if $Z(t)$ is $C^1$ in the strong sense then the
  same is true for $Z_\diamondsuit(t)$.
\end{proposition}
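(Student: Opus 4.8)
The plan is to carry out the gauge transformation explicitly, to isolate $Z_\diamondsuit(t)$ as a norm–convergent series of iterated commutators with $F(t)$, and then to bound each term in the $\YY(r-1,(i+1)\gamma)$–norm by means of Lemma~\ref{how} and Corollary~\ref{lemma_XY}. First I would record the elementary properties of $F(t)=\int_0^t\tilde Z(s)\,\dd s$. Since $\tilde Z$ is $T$-periodic with vanishing mean, $F(t+T)-F(t)=\int_0^T\tilde Z(s)\,\dd s=0$, so $F$ is $T$-periodic, and Hermiticity of $F$ is inherited from that of $Z$ (hence of $\bar Z$ and $\tilde Z$). Estimating entry by entry, $\|F(t)_{mn}\|\le\int_0^t\|\tilde Z(s)_{mn}\|\,\dd s$ together with $\|\bar Z\|_{r,i\gamma}\le\|Z\|_{r,i\gamma}$ gives $F(t)\in\YY(r,i\gamma)$; writing $F(t)$ as a weighted difference of two truncated integrals of $Z$ sharpens the crude bound to $\|F\|_{r,i\gamma}\le\tfrac12\,T\,\|Z\|_{r,i\gamma}$, which is the source of the constant in the exponent of (\ref{Z_diamond_estim}). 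Finally, since $F$ is bounded and strongly $C^1$ with $\dot F=\tilde Z$, the Duhamel representation of $\partial_t e^{-\im F(t)}$ shows that $e^{\pm\im F(t)}$ is strongly $C^1$; in particular $e^{-\im F(t)}$ preserves $\Dom(D)$ and, since $[F,H]$ is bounded by Lemma~\ref{how}, also $\Dom(H)$, so the operator $Z_\diamondsuit(t)$ defined by (\ref{Z_kara}) is indeed bounded.

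The core step is the conjugation identity. Using $DF(t)=F(t)D-\im\tilde Z(t)$ and the Duhamel formula for $\partial_t e^{-\im F(t)}$ one computes
\begin{displaymath}
  e^{\im F}De^{-\im F}=D-\int_0^1 e^{\im uF}\tilde Z\,e^{-\im uF}\,\dd u
  =D-\sum_{k\ge0}\frac{\im^{k}}{(k+1)!}\,\ad_F^{k}(\tilde Z),
\end{displaymath}
whereas $e^{\im F}(H+Y+Z)e^{-\im F}-(H+Y+Z)=\sum_{k\ge1}\frac{\im^{k}}{k!}\,\ad_F^{k}(H+Y+Z)$ is the ordinary adjoint-action series, legitimate since $\ad_F^{k}$ of each of $H$, $Y$, $Z$ is bounded for $k\ge1$. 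Adding these, subtracting $D+H+Y+\bar Z$, and using $Z-\tilde Z=\bar Z$ (which collapses the $k=0$ contributions: the leading term $\im[F,D]=-\tilde Z$ of the conjugation of $D$ cancels, by the very choice of $F$, the oscillating part $\tilde Z$ of the perturbation) leaves the closed form
\begin{displaymath}
  Z_\diamondsuit=\sum_{k\ge1}\frac{\im^{k}}{k!}\,\ad_F^{k-1}\!\big([F,H]+[F,Y]+[F,Z]\big)
  -\sum_{k\ge1}\frac{\im^{k}}{(k+1)!}\,\ad_F^{k-1}\!\big([F,\tilde Z]\big).
\end{displaymath}

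I would then estimate these series term by term. All the inner operators lie in $\YY(r-1,(i+1)\gamma)$: by Lemma~\ref{how}, $\|[F,H]\|_{r-1,(i+1)\gamma}\le C_H\|F\|_{r,i\gamma}$; by the first product formula of Corollary~\ref{lemma_XY} (applied for both orders of factors), $\|[F,Z]\|_{r-1,(i+1)\gamma}\le 2C_r\|F\|_{r,i\gamma}\|Z\|_{r,i\gamma}$ and likewise for $[F,\tilde Z]$ with $\|\tilde Z\|_{r,i\gamma}\le 2\|Z\|_{r,i\gamma}$; and $\|[F,Y]\|_{r-1,(i+1)\gamma}\le 4\|Y\|_{\infty,\gamma}\|F\|_{r,i\gamma}$, which I would prove directly from $[F,Y]_{mn}=F_{mn}Y_{nn}-Y_{mm}F_{mn}$, the bound $\|Y_{nn}\|+\|Y_{mm}\|\le 2\|Y\|_{\infty,\gamma}\min\{m,n\}^{-2\gamma}$, and the elementary inequality (\ref{<}). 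The key structural point that allows the series to be summed inside one fixed class is that the second product formula of Corollary~\ref{lemma_XY}, after relabelling $p\mapsto r$ and $i\mapsto i+1$, gives $\YY(r,i\gamma)\,\YY(r-1,(i+1)\gamma)\subset\YY(r-1,(i+1)\gamma)$, whence $\|\ad_F X\|_{r-1,(i+1)\gamma}\le x\,\|X\|_{r-1,(i+1)\gamma}$ with $x:=2C_r\|F\|_{r,i\gamma}$ for every $X\in\YY(r-1,(i+1)\gamma)$: a first commutator drops us from $\YY(r,i\gamma)$ to $\YY(r-1,(i+1)\gamma)$, and each further commutator merely costs a factor $x$ with no additional loss of indices. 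Summing, the first series contributes at most $\frac{e^{x}-1}{2C_r}\big(C_H+4\|Y\|_{\infty,\gamma}+2C_r\|Z\|_{r,i\gamma}\big)$ and the second at most $\frac{2\|Z\|_{r,i\gamma}(e^{x}-1-x)}{x}$; the elementary inequalities $2(e^{x}-1-x)/x\le e^{x}-1$ and $2e^{a}-1\le e^{4a}$ for $x,a\ge0$, combined with $x\le C_rT\|Z\|_{r,i\gamma}$, then compress the total to exactly (\ref{Z_diamond_estim}).

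The remaining qualitative claims follow by inspection of the closed form: $Z_\diamondsuit(t)$ is $T$-periodic and Hermitian because each summand is so (as $F$ is $T$-periodic and Hermitian), and it is strongly continuous --- and strongly $C^1$ whenever $Z$ is --- because each summand is obtained from $F$, $H$, $Z$ through commutators and is therefore strongly continuous (resp.\ strongly $C^1$), while the series converges in the $\YY(r-1,(i+1)\gamma)$–norm, hence uniformly in operator norm, so the limit inherits these properties. I expect the main obstacle to be the conjugation identity itself: unwinding $e^{\im F}De^{-\im F}$ in spite of the unboundedness of $D$, and then organizing the resulting expansion so that the unwanted contribution $\im[F,D]=-\tilde Z$ is seen to cancel and the numerical constants line up with (\ref{Z_diamond_estim}); once the closed form above is established, the estimates are a routine application of the two product lemmas.
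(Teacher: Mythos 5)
Your proof is correct and follows essentially the same route as the paper's: you expand the conjugation via the adjoint-action series, observe that the $k=0$ term coming from $\ad_F D=\im\tilde Z$ cancels the oscillating part $\tilde Z=Z-\bar Z$ of the perturbation, and then estimate the iterated commutators in $\YY(r-1,(i+1)\gamma)$ through Lemma~\ref{how} and Corollary~\ref{lemma_XY}, using that each further $\ad_F$ costs a factor $2C_r\|F\|_{r,i\gamma}$ without any additional loss of indices.

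The only genuine difference is organizational. The paper packages everything into a single series
$Z_\diamondsuit=\sum_{j\ge1}\frac{\im^j}{j!}\,\ad_F^{\,j-1}X(t)$
where the ($j$-dependent) operator $X(t)=\ad_F\bigl(H+Y+\tfrac{j}{j+1}Z+\tfrac{1}{j+1}\bar Z\bigr)$ already has $\|X\|_{r-1,(i+1)\gamma}\le\|F\|_{r,i\gamma}\bigl(C_H+4\|Y\|_{\infty,\gamma}+2C_r\|Z\|_{r,i\gamma}\bigr)$ uniformly in $j$; one then factors out $\sum_j\frac{x^{j-1}}{j!}=\frac{e^x-1}{x}$, uses the crude bound $\|F\|_{r,i\gamma}\le 2T\|Z\|_{r,i\gamma}$, and lands exactly on (\ref{Z_diamond_estim}). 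You keep the two series separate, so the raw bound you obtain is $\frac{e^x-1}{2C_r}(C_H+4\|Y\|_{\infty,\gamma}+4C_r\|Z\|_{r,i\gamma})$ with the sharper $x\le C_rT\|Z\|_{r,i\gamma}$ (via $\|F\|_{r,i\gamma}\le\tfrac{T}{2}\|Z\|_{r,i\gamma}$) — a different, in fact slightly stronger, expression. To recover (\ref{Z_diamond_estim}) literally you then need the two elementary inequalities you quote: $(b+2c)\le 2(b+c)$ together with $2(e^a-1)\le e^{4a}-1$ gives $(e^a-1)(b+2c)\le(e^{4a}-1)(b+c)$, which does the job. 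So the claim that the total "compresses \emph{exactly}" to (\ref{Z_diamond_estim}) is a slight overstatement — your bound is genuinely different and merely dominated by the stated one — but the proof is complete and correct, and all the intermediate commutator estimates ($\|[F,H]\|$, $\|[F,Y]\|$, $\|[F,Z]\|$, $\|[F,\tilde Z]\|$) agree with the paper's.
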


\begin{proof}
  The periodicity and the differentiability are clear from the above
  discussion. The RHS of (\ref{Z_kara}) can be expanded according to
  the formula
  \begin{displaymath}
    e^A B e^{-A} = B +\sum_{j=1}^\infty \frac1{j!}\ad_A^j(B).
  \end{displaymath}
  Here we use the notation $\ad_A(B):=[A,B]=AB-BA$. Since
  $\ad_{F(t)}D=\im\dot F(t)=\im\tilde{Z}(t)$ we get
  \begin{eqnarray}
    \nonumber
    Z_\diamondsuit(t) &=& \sum_{j=1}^\infty \frac{\im^j}{j!}\,
    \ad^{\,j-1}_{F(t)}\left(\im\tilde Z(t)+\left[F(t),H+Y+Z(t)\right]
    \right) +\tilde Z(t) \\
    \label{V_nplus1}
    &=& \sum_{j=1}^\infty \frac{\im^j}{j!}\, \ad^{\,j-1}_{F(t)}X(t)
  \end{eqnarray}
  where
  \begin{displaymath}
    X(t) := \ad_{F(t)}\!\left(H+Y+Z(t)-\frac1{j+1}\tilde Z(t)\right) 
    = \ad_{F(t)}\!\left(H+Y+\frac j{j+1}Z(t)+\frac1{j+1}\bar Z\right).
  \end{displaymath}
  By Lemma~\ref{how}, $\ad_{F(t)}H\in\YY(r-1,(i+1)\gamma)$, and
  according to Corollary~\ref{lemma_XY}, the same holds true for
  $\ad_{F(t)}Z(t)$ and $\ad_{F(t)}\bar{Z}$. Notice also that
  $\|\bar{Z}\|_{p,\delta}\leq\|Z\|_{p,\delta}$. Furthermore, since
  $Y\in\YY(\infty,\gamma)$ is diagonal we have
  \begin{eqnarray*}
    && \langle m-n\rangle^{r-1}\max\{m,n\}^{2(i+1)\gamma}
    \|(F(t)Y)_{m,n}\|\\
    && \leq \frac{1}{\langle m-n\rangle}
    \left(\frac{\max\{m,n\}}{n}\right)^{\!2\gamma}
    n^{2\gamma}\|F\|_{r,i\gamma}\|Y_{n,n}\|
    \leq 2^{2\gamma}\|F\|_{r,i\gamma}\|Y\|_{\infty,\gamma}.
  \end{eqnarray*}
  Hence
  $\|F(t)Y\|_{r-1,(i+1)\gamma}\leq2\|F\|_{r,i\gamma}
  \|Y\|_{\infty,\gamma}$. The same estimate is true for
  $\|YF(t)\|_{r-1,(i+1)\gamma}$ and therefore
  $\|\ad_{F}Y\|_{r-1,(i+1)\gamma}\leq4\|F\|_{r,i\gamma}
  \|Y\|_{\infty,\gamma}$. We conclude that $X(t)$ belongs to
  $\YY(r-1,(i+1)\gamma)$ and
  \begin{equation}
    \label{eq:X_estim}
    \Vert X\Vert_{r-1, (i+1)\gamma} \leq \Vert F\Vert_{r,i\gamma}
    \left(C_H+4\Vert Y\Vert_{\infty,\gamma}
      +2C_r\Vert Z\Vert_{r,i\gamma}\right).
  \end{equation}
  Recalling Corollary~\ref{lemma_XY} once more we have
  \begin{displaymath}
    \YY(r-1,(i+1)\gamma)\YY(r,i\gamma),\textrm{~}
    \YY(r,i\gamma)\YY(r-1,(i+1)\gamma) \subset \YY(r-1,(i+1)\gamma)
  \end{displaymath}
  and so $\ad^{\,j-1}_{F(t)}X(t)$ lies in $\YY(r-1,(i+1)\gamma)$ as
  well and
  \begin{equation}
    \label{eq:adFX_estim}
    \Vert\ad^{\,j-1}_{F}X\Vert_{r-1,(i+1)\gamma}
    \leq \left(2C_r \Vert F\Vert_{r,i\gamma}\right)^{j-1}
    \Vert X\Vert_{r-1,(i+1)\gamma}\,.
  \end{equation}
  Consequently, the series (\ref{V_nplus1}) converges in the Banach
  space $\YY\left(r-1,(i+1)\gamma\right)$. To derive inequality
  (\ref{Z_diamond_estim}) from (\ref{eq:X_estim}) and
  (\ref{eq:adFX_estim}) one applies the estimate
  $\Vert{}F\Vert_{r,i\gamma}\le2T\Vert Z\Vert_{r,i\gamma}$ which
  immediately follows from the definition of $F(t)$ and
  $\tilde{Z}(t)$. This completes the proof.
\end{proof}

\begin{remark}
  \label{factorised}
  The proposition holds also true for $i=0$ provided $[Z(t),Z(s)]=0$
  for every $t,s$. In this case $F(t)$ commutes with $Z(t)$ and
  $\bar{Z}$, and the formula (\ref{V_nplus1}) holds true with
  $X(t)=\ad_{F(t)}(H +Y)$. Repeating the steps from the proof of the
  proposition one arrives at the inequality
  \begin{displaymath}
    \Vert Z_\diamondsuit\Vert_{r-1,(i+1)\gamma}
    \le \frac{\exp\!\left(4C_rT\,\Vert Z\Vert_{r,i\gamma}\right)-1}
    {2C_r}\,(C_H+2\Vert Y\Vert_{\infty,\gamma}).
  \end{displaymath}
\end{remark}

\section{The diagonalization procedure}
\label{sec:diagonalization}

\subsection{Formulation of the result}
\label{sec:formulation}

The main result of this section is formulated in the following
proposition.

\begin{proposition}
  \label{p-d}
  Let $i\ge1$ be a natural number, $\gamma\in\,]0,\frac12[\,$, and $H$
  be a self-adjoint operator with a pure point spectrum and the
  spectral decomposition $H=\sum_{n}E_nP_n$. Assume that the
  eigen-values $\{E_n\}_{n=1}^\infty$ are ordered increasingly and
  satisfy the inequality
  \begin{equation}
    \label{c_H}
    |E_m-E_n| \ge c_H\frac{|m-n|}{\max\{m,n\}^{2\gamma}}.
  \end{equation}
  Let $Y\in\YY(\infty, \gamma)$ be Hermitian and commuting with $H$.
  Suppose that $\bar Z$ is Hermitian and belongs to the class
  $\YY(r,i\gamma)$ for some $r>2$. Finally, assume that
  \begin{equation}
    \label{normbarZ}
    \|Y\|_{\infty,\gamma}
    +\Vert\bar Z\Vert_{r,i\gamma}
    \le \frac{c_H}{4\pi\,C_{r+1}},
  \end{equation}
  with the constant $C_{r+1}$ given by (\ref{norm_const}).
  
  Then there exists $U$, a unitary operator on $\Hs$, such that
  \begin{equation}
    \label{eq:UinftyH}
    U(H+Y+\bar Z)U^* = H+A
  \end{equation}
  where $A\in\YY(\infty,\gamma)$ commutes with $H$ and obeys
  \begin{equation}
    \label{eq:normA_estim}
    \Vert{}A\Vert_{\infty,\gamma}
    \le 2\left(\Vert{}Y\Vert_{\infty,\gamma}
      +\Vert\bar{Z}\Vert_{r,i\gamma}\right).
  \end{equation}
  Moreover, for every operator $X\in\YY(r-1,(i+1)\gamma)$ it holds
  \begin{equation}
    \label{eq:UXU_estim}
    \Vert UXU^*\Vert_{r-1,(i+1)\gamma}
    \leq \exp\!\left(2\,\frac{C_r}{C_{r+1}}\right)
    \Vert X\Vert_{r-1,(i+1)\gamma}.
  \end{equation}
\end{proposition}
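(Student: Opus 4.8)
The plan is to construct the unitary $U$ as an infinite product of ``elementary'' unitary rotations produced by a Newton-type iteration, in the spirit of the diagonalization scheme of \cite{DLSV}. Write $W_0:=Y+\bar Z$, so that the operator to be diagonalized is $H+W_0$ with $W_0$ Hermitian, $W_0\in\YY(r,i\gamma)$ (note $\YY(\infty,\gamma)\subset\YY(r,i\gamma)$ since $r>2$, $i\ge1$), and with the smallness bound (\ref{normbarZ}) on $\|Y\|_{\infty,\gamma}+\|\bar Z\|_{r,i\gamma}$. At each step I would split the current perturbation $W_k$ into its diagonal part $\diag W_k$ (which commutes with $H$) and its off-diagonal part $\offdiag W_k$, choose the generator $S_k$ solving the commutator equation $[H,S_k]=\offdiag W_k$ entrywise, i.e. $(S_k)_{mn}=(W_k)_{mn}/(E_n-E_m)$ for $m\ne n$ and $(S_k)_{nn}=0$, and set $U_{k+1}=e^{\im S_k}$. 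Then
\begin{displaymath}
  e^{\im S_k}(H+W_k)e^{-\im S_k}
  = H+\diag W_k + \bigl(\offdiag W_k + \im[S_k,H]\bigr)
    + \text{(higher commutators)},
\end{displaymath}
and since $\offdiag W_k+\im[S_k,H]=0$ by construction, the new perturbation $W_{k+1}$ consists only of the old diagonal part plus terms at least quadratic in $S_k$, hence quadratically small.

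The key quantitative input is the pair of estimates: first, the lower gap bound (\ref{c_H}) together with the elementary inequality (\ref{<}) gives $\|(S_k)_{mn}\|\le c_H^{-1}\langle m-n\rangle^{-1}\max\{m,n\}^{2\gamma}\cdot\langle m-n\rangle^{-1}\|(W_k)_{mn}\|\cdot(\text{factor})$; more precisely one checks $S_k\in\YY(r+1,i\gamma-\gamma)$ with $\|S_k\|_{r+1,(i-1)\gamma}\le (1/c_H)\|W_k\|_{r,i\gamma}$ — this is exactly the ``inverse'' of Lemma~\ref{how}, and it is where the shift $p\mapsto p+1$, $\delta\mapsto\delta-\gamma$ appears (the reason the hypothesis is stated with $C_{r+1}$ rather than $C_r$). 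Second, Corollary~\ref{lemma_XY} controls the products: the third product formula there, $\YY(r+1,(i-1)\gamma)\,\YY(r-1,(i+1)\gamma)\subset\YY(r-1,(i+1)\gamma)$, is tailored precisely so that $[S_k,W_k]$ and all higher commutators $\ad_{S_k}^j(\cdot)$ stay in $\YY(r-1,(i+1)\gamma)$ (or in $\YY(r,i\gamma)$ for the purposes of the next iteration, using the first two product formulas), with a loss of at most $2C_r\|S_k\|$ per commutator. Feeding the smallness assumption (\ref{normbarZ}) into $\|S_k\|_{r+1,(i-1)\gamma}\le c_H^{-1}\|W_k\|\le 1/(4\pi C_{r+1})$ makes the geometric series $\sum_j (2C_r\|S_k\|)^{j}/j!$ converge and, crucially, gives a contraction $\|{\offdiag}W_{k+1}\|\le \text{const}\cdot\|W_k\|^2$, so that $\|W_k\|_{r,i\gamma}$ decays doubly-exponentially while the diagonal parts accumulate to a limit $A$ with $\|A\|_{\infty,\gamma}\le\sum_k\|\diag W_k\|\le 2\|W_0\|$, yielding (\ref{eq:normA_estim}). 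The products $U^{(N)}=e^{\im S_{N-1}}\cdots e^{\im S_0}$ converge in operator norm to a unitary $U$ satisfying (\ref{eq:UinftyH}); note one must also verify $A\in\YY(\infty,\gamma)$, i.e. that the limiting diagonal has entries $O(n^{-2\gamma})$, which follows since each $\diag W_k\in\YY(\infty,\gamma)$ with summable norms.

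For (\ref{eq:UXU_estim}): given $X\in\YY(r-1,(i+1)\gamma)$, each conjugation $X\mapsto e^{\im S_k}Xe^{-\im S_k}=X+\sum_{j\ge1}\frac{\im^j}{j!}\ad_{S_k}^j X$ multiplies the $\YY(r-1,(i+1)\gamma)$-norm by at most $\exp(2C_r\|S_k\|_{r+1,(i-1)\gamma})$, again by Corollary~\ref{lemma_XY} (third formula). Multiplying over all $k$ and using $\sum_k\|S_k\|_{r+1,(i-1)\gamma}\le c_H^{-1}\sum_k\|W_k\|_{r,i\gamma}\le c_H^{-1}\cdot\frac{2}{c_H}\|W_0\|\le 1/C_{r+1}$ (the geometric-type bound on the sum of the doubly-decaying sequence, using (\ref{normbarZ})) gives the product bound $\exp(2C_r/C_{r+1})$, which is (\ref{eq:UXU_estim}). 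I expect the main obstacle to be bookkeeping the class indices so that the iteration closes: one must ensure simultaneously that $W_{k+1}$ lands back in $\YY(r,i\gamma)$ (so the scheme can be repeated) \emph{and} that the conjugation of an external $X\in\YY(r-1,(i+1)\gamma)$ stays in that same class, which is why all three product formulas of Corollary~\ref{lemma_XY}, and the precise shift in the ``inverse-commutator'' estimate for $S_k$, are needed in exactly the stated forms; the convergence and unitarity of the infinite product are then routine given the summability of $\|S_k\|$.
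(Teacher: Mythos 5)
Your strategy is the right one in broad outline (iterated conjugations by small unitaries, using the gap lower bound to solve the commutator equation and Corollary~\ref{lemma_XY} to close the commutator estimates), but there is a genuine flaw in the convergence argument. You choose $S_k$ to solve $[H,S_k]=\offdiag W_k$, i.e.\ the commutator is taken with the \emph{bare} $H$. Expanding the conjugation, the new Hamiltonian is
\begin{displaymath}
  e^{\im S_k}(H+W_k)e^{-\im S_k}
  = H + \diag W_k
    + \Phi(\ad_{\im S_k})\,\offdiag W_k
    + \bigl(e^{\ad_{\im S_k}}-1\bigr)\diag W_k,
\end{displaymath}
and the last term contains $\im[S_k,\diag W_k]$, which is \emph{first-order} in $S_k$, not ``at least quadratic.'' Since $\|\diag W_k\|_{\infty,\gamma}$ does not shrink along the iteration (it converges to $\|A\|_{\infty,\gamma}$), the off-diagonal part of the new perturbation is bounded only by something of order $\|S_k\|\,\|\diag W_k\|\sim \|\offdiag W_k\|$, i.e.\ you obtain at best a geometric contraction, not the doubly-exponential decay $x_s\leq x_1^{2^{s-1}}$ you invoke. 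That claim is therefore false as stated, and the constants in (\ref{eq:normA_estim}) and (\ref{eq:UXU_estim}), which you derive from the doubly-exponential rate, are not justified.

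The paper sidesteps this precisely by solving $[H+G_s,W_s]=V_s$ with $G_s$ the \emph{accumulated} diagonal, so that $H_{s+1}=H+G_s+\Phi(\ad_{W_s})V_s$ and the new off-diagonal part is genuinely $O(\|W_s\|\,\|V_s\|)$, quadratic in $V_s$. This has two consequences you should be aware of. First, since the blocks $(G_s)_{n,n}$ are operators on possibly multi-dimensional $\Ran P_n$, the commutator equation is a Sylvester equation and is solved via the Bhatia--Rosenthal theorem (Lemma~\ref{Bhatia}); that is exactly where the factor $\pi$ in the hypothesis (\ref{normbarZ}) comes from. Your scheme avoids Bhatia--Rosenthal (division by $E_n-E_m$ is explicit), which is why no $\pi$ appears in your estimates — a sign that your bookkeeping does not actually reproduce the stated constants. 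Second, to control the small divisors of $H+G_s$ one must show the uniform bound $\|G_s\|_{\infty,\gamma}\leq c_H/6$ along the iteration (inequality (\ref{G}) and the estimate (\ref{small_divisors}) in the paper); this step has no counterpart in your proposal but is essential once $G_s$ enters the denominator.

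In short: replace $[H,S_k]=\offdiag W_k$ by $[H+\diag W_k,S_k]=\offdiag W_k$, add the Bhatia--Rosenthal input and the a~priori bound on the accumulated diagonal, and then the quadratic convergence you were aiming for — and hence the sums $\sum_s\|S_s\|\leq 1/(2C_{r+1})$ needed for (\ref{eq:UXU_estim}) — becomes correct.
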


Since $U$ does not depend on time this result can be interpreted in
the following way.

\begin{corollary}
  \label{thm:Kheart}
  Let us consider a Floquet Hamiltonian of the form
  \begin{displaymath}
    K_\diamondsuit = D+H+Y+\bar{Z}+Z_\diamondsuit(t)
  \end{displaymath}
  where $H$, $Y$ and $\bar{Z}$ obey the same assumptions as in
  Proposition~\ref{p-d}, with $r>2$ and $i\geq1$, and
  $Z_\diamondsuit(t)\in\YY\left(r-1,(i+1)\gamma\right)$ is
  $T$-periodic, continuously differentiable in the strong sense and
  Hermitian.
  
  Then there exists a unitary operator $U$ on $\Hs$ such that for the
  transformed Floquet Hamiltonian
  \begin{displaymath}
    K_\heartsuit := UK_\diamondsuit U^* = D+H+A+B(t)
  \end{displaymath}
  it holds: $A\in\YY(\infty,\gamma)$ commutes with $H$ and fulfills
  (\ref{eq:normA_estim}),
  \begin{displaymath}
    B(t) := UZ_\diamondsuit(t)U^*
    \in\YY(r-1,(i+1)\gamma)
  \end{displaymath}
  is $T$-periodic, continuously differentiable in the strong sense,
  Hermitian and satisfies
  \begin{displaymath}
    \Vert B\Vert_{r-1,(i+1)\gamma}
    \leq \exp\!\left(2\,\frac{C_r}{C_{r+1}}\right)
    \Vert Z_\diamondsuit\Vert_{r-1,(i+1)\gamma}.
  \end{displaymath}
\end{corollary}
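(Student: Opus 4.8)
The plan is to read off the statement from Proposition~\ref{p-d} together with the elementary observation that a time-independent conjugation is compatible with the operator $D=-\im\partial_t$. First, note that the hypotheses placed here on $H$, $Y$ and $\bar Z$ are precisely those of Proposition~\ref{p-d}; in particular the smallness condition (\ref{normbarZ}) holds. Hence Proposition~\ref{p-d} provides a unitary operator $U$ on $\Hs$ with
\[
  U(H+Y+\bar Z)U^\ast = H+A,
\]
where $A\in\YY(\infty,\gamma)$ commutes with $H$ and satisfies (\ref{eq:normA_estim}).

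Next I would lift $U$ to a unitary on $L^2([0,T],\Hs)$ acting pointwise in the time variable, $(U\psi)(t)=U\psi(t)$. Because $U$ does not depend on $t$, it preserves the periodic domain of $D$ and commutes with $D$, so $UDU^\ast=D$. Therefore
\[
  K_\heartsuit := UK_\diamondsuit U^\ast
  = D + U(H+Y+\bar Z)U^\ast + UZ_\diamondsuit(t)U^\ast
  = D+H+A+B(t),
\]
with $B(t):=UZ_\diamondsuit(t)U^\ast$, which is the claimed decomposition.

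It remains to verify the properties of $B(t)$. Hermiticity is immediate from the unitarity of $U$ and the Hermiticity of $Z_\diamondsuit(t)$; $T$-periodicity follows from $B(t+T)=UZ_\diamondsuit(t+T)U^\ast=UZ_\diamondsuit(t)U^\ast=B(t)$; and strong continuous differentiability is inherited from that of $Z_\diamondsuit(t)$, since $U$ is a fixed bounded operator and hence $\dot B(t)=U\dot Z_\diamondsuit(t)U^\ast$. Finally, applying the norm estimate (\ref{eq:UXU_estim}) with $X=Z_\diamondsuit(t)$ for each fixed $t$ — the constant on its right-hand side being independent of $t$ — and taking the supremum over $t\in[0,T]$ gives
\[
  \Vert B\Vert_{r-1,(i+1)\gamma}
  \le \exp\!\left(2\,\frac{C_r}{C_{r+1}}\right)
  \Vert Z_\diamondsuit\Vert_{r-1,(i+1)\gamma},
\]
so in particular $B(t)\in\YY(r-1,(i+1)\gamma)$, as asserted.

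There is no substantial obstacle in this argument; the two points that deserve a word of care are the interchange of the conjugation with $D$, which rests exactly on $U$ being time-independent, and the observation that the bound (\ref{eq:UXU_estim}) is uniform in $t$, so that it survives the passage to the $\YY$-norm of the $T$-periodic family $B$, which is defined as a supremum over $[0,T]$.
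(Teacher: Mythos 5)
Your proposal is correct and follows exactly the approach the paper intends; the paper omits an explicit proof, giving only the one-line remark before the corollary that ``Since $U$ does not depend on time this result can be interpreted in the following way,'' and your argument correctly fills in the routine details (pointwise lifting of $U$, commutation with $D$, transport of periodicity, Hermiticity and strong differentiability through conjugation by a fixed unitary, and uniformity in $t$ of the norm bound (\ref{eq:UXU_estim})).
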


The proof of Proposition~\ref{p-d} is a modification (to the case of
shrinking gaps) of a diagonalization procedure introduced in
\cite{DLSV} and conventionally called the progressive diagonalization
method.

\subsection{The algorithm}

The diagonalization procedure is constructed iteratively, let us first
describe the algorithm. Starting from $H+Y+\bar Z$ we construct the
first 4-tuple of operators
\begin{displaymath}
  U_0 := 1,\textrm{~} G_1 := Y+\diag\bar Z,
  \textrm{~} V_1 := \offdiag \bar Z,\textrm{~} H_1 := H+G_1+V_1,
\end{displaymath}
where 
\begin{displaymath}
  \diag X := \sum_{n\in\N} P_n X P_n,\textrm{~}
  \offdiag X := \sum_{m\neq n} P_m X P_n
\end{displaymath}
denote the diagonal and the off diagonal part of the matrix of an
operator $X$ with respect to the eigen-basis of $H$. We define
recursively a sequence of operators $H_s$, $G_s$, $V_s$, $W_s$ and
$U_s$ by the following rules: provided $G_s$ and $V_s$ have been
already defined let $W_s$ be the solution of
\begin{equation}
  \label{anticom}
  [H+G_s,W_s] = V_s \textrm{~~and~~} \diag W_s = 0.
\end{equation}
We define
\begin{equation}
  \label{Ha_s0}
  H_{s+1} := e^{W_s}H_s e^{-W_s}.
\end{equation}
Finally, we set 
\begin{equation}
  \label{eq:UsGsVs_def}
  U_s := e^{W_s} U_{s-1},\textrm{~}
  G_{s+1} := \diag H_{s+1}-H,\textrm{~}
  V_{s+1} := \offdiag H_{s+1}.
\end{equation}

Since $H_s=H+G_s+V_s$ for all $s$ and with the aid of (\ref{anticom})
one derives from (\ref{Ha_s0}) that
\begin{eqnarray}
  \nonumber
  H_{s+1} &=& H_s+\sum_{k=1}^\infty\frac{1}{k!}\,
  \ad^{\,k-1}_{W_s}[W_s,H_s]
  \,=\, H+G_s+V_s+\sum_{k=1}^\infty\frac{1}{k!}\,
  \ad^{\,k-1}_{W_s}(-V_s+[W_s,V_s])\\
  \label{Ha_s}
  &=& H+G_s+\Phi(\ad_{W_s})V_s
\end{eqnarray}
where
\begin{equation}
  \label{eq:Phi_def}
  \Phi(x) := \sum_{k=1}^\infty\frac k{(k+1)!}\,x^k
  = e^x-\frac1x(e^x-1)
\end{equation}

Observe also that in the course of the algorithm, $G_s$ is always
diagonal (commuting with $H$) and symmetric, $V_s$ is symmetric and
off diagonal, $W_s$ is antisymmetric and off diagonal. Therefore
$e^{W_s}$ and $U_s$ are unitary. It is straightforward to prove by
induction that for every $s=1,2,\ldots$,
\begin{equation}
  \label{Hs_plus_1}
  H+G_{s+1}+V_{s+1}=U_s(H+Y+\bar Z)U_s^*.
\end{equation}

\subsection{Auxiliary facts}
\label{sec:auxiliary}

To solve the commutator equation (\ref{anticom}) we need the following
result taken from a paper by Bhatia and Rosenthal.

\begin{lemma}[\cite{BR}]
  \label{Bhatia}
  Let $E$ and $F$ be two Hilbert spaces. Let $A$ and $B$ be Hermitian
  operators (i.e., bounded and self-adjoint) on $E$ and $F$,
  respectively, such that $\dist\!\left(\sigma(A),\sigma(B)\right)>0$.
  Then for every bounded operator $Y:F\rightarrow{}E$ there exists a
  unique bounded operator $X:F\rightarrow{}E$ such that
  \begin{displaymath}
    A X- X B=Y. 
  \end{displaymath}
  Moreover, the inequality
  \begin{displaymath}
    \Vert X\Vert
    \le \frac\pi{2\dist\!\left(\sigma(A),\sigma(B)\right)}\,
    \Vert Y\Vert,
  \end{displaymath}
  holds true.
\end{lemma}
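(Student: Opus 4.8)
The statement bundles three assertions — existence of a solution $X$, its uniqueness, and the norm bound with the sharp constant — so I would treat them in turn. Throughout write $d:=\dist(\sigma(A),\sigma(B))>0$.

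\emph{Uniqueness} is soft. If $AX-XB=0$ then $A^kX=XB^k$ for every $k$, hence $f(A)X=Xf(B)$ for every polynomial $f$ and, by the continuous functional calculus together with the Stone--Weierstrass theorem, for every $f$ continuous on a compact set containing $\sigma(A)\cup\sigma(B)$. Since $\sigma(A)$ and $\sigma(B)$ are disjoint compact subsets of $\R$, one may choose $f$ equal to $1$ on a neighbourhood of $\sigma(A)$ and to $0$ on a neighbourhood of $\sigma(B)$; then $X=f(A)X=Xf(B)=0$.

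\emph{Existence and the bound} I would obtain from the Fourier-integral representation of the solution. Fix a function $\phi\in L^1(\R)$ and set
\begin{displaymath}
  X:=\int_{-\infty}^\infty\phi(t)\,e^{\im tA}\,Y\,e^{-\im tB}\,\dd t .
\end{displaymath}
Because $\|e^{\im tA}\|=\|e^{-\im tB}\|=1$, the integral converges in norm and $\|X\|\le\|\phi\|_{L^1}\|Y\|$. Let $g(s):=\int_{-\infty}^\infty\phi(t)\,e^{\im ts}\,\dd t$ be the (continuous, bounded) inverse Fourier transform of $\phi$. Inserting the spectral resolutions $A=\int a\,\dd E_A(a)$, $B=\int b\,\dd E_B(b)$ and using Fubini, one computes
\begin{displaymath}
  X=\iint g(a-b)\,\dd E_A(a)\,Y\,\dd E_B(b),\qquad
  AX-XB=\iint (a-b)\,g(a-b)\,\dd E_A(a)\,Y\,\dd E_B(b).
\end{displaymath}
Now $a-b$ ranges only over $\{a-b:\ a\in\sigma(A),\,b\in\sigma(B)\}\subset\{s:\ |s|\ge d\}$. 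Hence if $\phi$ is chosen so that $g(s)=1/s$ for all $s$ with $|s|\ge d$ (on $[-d,d]$ the value of $g$ is irrelevant), then $(a-b)g(a-b)=1$ on the relevant set and $AX-XB=\iint \dd E_A(a)\,Y\,\dd E_B(b)=Y$. Thus this $X$ solves the equation and satisfies $\|X\|\le\|\phi\|_{L^1}\|Y\|$.

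\emph{The crux} is therefore the Fourier-multiplier extremal problem: among all $\phi\in L^1(\R)$ whose inverse Fourier transform agrees with $1/s$ on $\{|s|\ge d\}$, minimize $\|\phi\|_{L^1}$, the claim being that the minimum equals $\pi/(2d)$. By the rescaling $\phi_d(t):=d\,\phi_1(dt)$ it is enough to treat $d=1$, where one writes down the explicit extremal function (as in Bhatia--Rosenthal); its $L^1$-norm works out to $\pi/2$. A matching lower bound — most easily a concrete choice of $A$, $B$, $Y$ (e.g. with $\sigma(A)$ and $\sigma(B)$ finite and suitably interleaved so that the exponential trick $X=\int_0^\infty e^{-tA}Ye^{tB}\dd t$, which would only give $1/d$, is unavailable) — shows the constant cannot be improved. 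I expect this sharp-constant $L^1$ computation to be the only genuinely technical step; the representation formula, its verification via the spectral theorem, and the uniqueness argument are all routine.
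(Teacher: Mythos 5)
Your proposal follows essentially the same route as the paper's own (deliberately sketchy) treatment: the lemma is quoted from Bhatia--Rosenthal, and the accompanying remark records exactly the Fourier-integral representation $X=\int_{\R}e^{-\im tA}Ye^{\im tB}f(t)\,\dd t$ with $\hat f(s)=1/(\sqrt{2\pi}\,s)$ on $\sigma(A)-\sigma(B)$, the resulting bound $\Vert X\Vert\le\Vert f\Vert_1\Vert Y\Vert$, and the optimization over admissible $f$ giving the constant $\pi/2$. Your uniqueness argument is correct (and a welcome addition the paper omits), and like the paper you defer the sharp $L^1$-extremal computation to the cited source, so the two treatments coincide in substance.
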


\begin{rem*}
  The solution $X$ is given by
  \begin{displaymath}
    X=\int_{\R} e^{-\im t A} Y e^{\im t B}f(t)\,\dd t
  \end{displaymath}
  for any $f\in L^1(\R)$ such that its Fourier image obeys
  $\hat{f}(s)=1/\sqrt{2\pi}s$ on the set
  \linebreak$\sigma(A)-\sigma(B)$.
  This implies $\Vert{}X\Vert\le\Vert{}f\Vert_1\Vert{}Y\Vert$, and
  optimizing over such $f$ one gets the constant $\pi/2$.
\end{rem*}

In the algorithm plays a certain role the function $\Phi(x)$
introduced in (\ref{eq:Phi_def}). It is supposed to be defined on the
interval $[0,\infty[$. Let us point out here some of its elementary
properties. This is a strictly increasing function mapping the
interval $[0,\infty[$ onto itself. It holds $\Phi(0)=0$, $\Phi(1)=1$,
and so the function maps also the interval $]0,1[$ onto itself.
Moreover, $\Phi(x)$ is a convex function and so
\begin{equation}
  \label{eq:Phi_convex}
  \forall x\in\,]0,1[,\textrm{~}
  \Phi(x) < x.
\end{equation}

Further, let us consider a sequence $\{x_s\}_{s=1}^\infty$ formed by
nonnegative numbers obeying the inequalities
\begin{equation}
  \label{eq:x_leq_Phix}
  \forall s\in\N,\textrm{~}x_{s+1} \leq \Phi(x_s)x_s.
\end{equation}
If $x_1<1$ then the sequence is non-increasing and
(\ref{eq:Phi_convex}), (\ref{eq:x_leq_Phix}) imply that
$x_{s+1}\leq{}x_s^{\,2}$. It follows that
\begin{displaymath}
  \forall s\in\N,\textrm{~}x_s \leq x_1^{\,2^{s-1}},
\end{displaymath}
and
\begin{equation}
  \label{eq:xs_sum}
  \sum_{s=1}^\infty x_s \leq \frac{x_1}{1-x_1} < \infty.
\end{equation}

\subsection{Convergence of the algorithm}

\begin{proof}[Proof of Proposition~\ref{p-d}]
  We have to prove that $V_s\to0$, $G_s\to{}A$ and $U_s\to{}U$. The
  key ingredient of the algorithm is the control of the size of $W_s$
  given as the off diagonal solution to the commutator equation
  (\ref{anticom}). For every $m\neq{}n$ we seek $W_s(m,n)$ such that
  \begin{displaymath}
    \left(E_m +(G_s)_{m,m}\right) (W_s)_{m,n}-(W_s)_{m,n}
    \left(E_n+(G_s)_{n,n}\right) = (V_s)_{m,n}.
  \end{displaymath}
  Suppose for the moment that $G_s$ lies in $\YY(\infty,\gamma)$ for
  every $s\in\N$ with
  \begin{equation}
    \label{G}
    \Vert G_s \Vert_{\infty,\gamma} \le \frac{c_H}{6}.
  \end{equation}
  The norm $\|\cdot\|_{\infty,\gamma}$ makes sense in this case since
  $G_s$ is diagonal for every $s\in\N$. The spectrum of
  $E_n+(G_s)_{n,n}$ is a subset of the interval
  \begin{displaymath}
    \Big[\,E_n-\frac{\Vert G_s\Vert_{\infty,\gamma}}{n^{2\gamma}},
    E_n+\frac{\Vert G_s \Vert_{\infty,\gamma}}{n^{2\gamma}}\,\Big]. 
  \end{displaymath}
  Owing to (\ref{c_H}) the distance between the spectrum of
  $E_m+(G_s)_{m,m}$ and $E_n+(G_s)_{n,n}$ can be estimated from below
  by
  \begin{eqnarray}
    \nonumber
    |E_m-E_n|-\Vert G_s \Vert_{\infty,\gamma}
    \left(m^{-2\gamma}+n^{-2\gamma} \right)
    &\geq& c_H\,\frac{|m-n|}{\max\{m,n\}^{2\gamma}}
    -\frac{c_H}{6}(m^{-2\gamma}+n^{-2\gamma}) \\
    \label{small_divisors}
    &\geq& \frac{c_H |m-n|}{2\max\{m,n\}^{2\gamma}}\,.
  \end{eqnarray}
  The last inequality in (\ref{small_divisors}) is a consequence of
  the following estimate where we assume for definiteness that $m>n$
  (recall that $2\gamma<1$):
  \begin{displaymath}
    \frac{3(m-n)}{m^{2\gamma}}
    \geq m^{-2\gamma}+\frac{m}{n}\,m^{-2\gamma}
    \geq m^{-2\gamma}+n^{-2\gamma}.
  \end{displaymath}
  Applying Lemma~\ref{Bhatia} we conclude that
  \begin{equation}
    \label{eq:Wmn_leq_Vmn}
    \Vert(W_s)_{m,n}\Vert
    \le \frac{\pi\max\{m,n\}^{2\gamma}}{c_H |m-n|}
    \Vert (V_s)_{m,n}\Vert.
  \end{equation}

  Set
  \begin{equation}
    \label{eq:M_xs_def}
    M:=\frac{c_H}{2\pi C_{r+1}},\textrm{~~}
    x_s := \frac{\Vert V_s\Vert_{r,i\gamma}}{M},
  \end{equation}
  If $V_s$ lies in the class $\YY(r,i\gamma)$ then one derives from
  (\ref{eq:Wmn_leq_Vmn}) that $W_s\in\YY(r+1,(i-1)\gamma)$ and
  \begin{equation}
    \label{eq:Ws_leq_xs}
    \Vert W_s \Vert_{r+1,(i-1)\gamma}\le 
    \frac \pi{c_H}\Vert V_s\Vert_{r,i\gamma} = \frac{x_s}{2C_{r+1}}.
  \end{equation}
  From Corollary~\ref{lemma_XY} it follows that
  $\ad^{\,k}_{W_s}V_s\in\YY(r,i\gamma)$ and
  \begin{equation}
    \label{kolotoc}
    \Vert \ad^{\,k}_{W_s}V_s\Vert_{r,i\gamma}
    \le \left( 2 C_{r+1}
      \Vert W_s\Vert _{r+1, (i-1)\gamma} \right)^k 
    \Vert V_s\Vert_{r,i\gamma}
    \leq x_s^{\,\,k}\|V_s\|_{r,i\gamma},
  \end{equation}
  Since $V_{s+1}$ is defined as the off diagonal part of $H_{s+1}$ we
  get from (\ref{Ha_s}) and (\ref{kolotoc}) that
  \begin{displaymath}
    V_{s+1} = \offdiag\!\left(\Phi(\ad_{W_s})V_s\right).
  \end{displaymath}
  and so
  \begin{displaymath}
    \Vert V_{s+1}\Vert_{r, i\gamma}
    \le \Phi(x_s)\Vert V_s\Vert_{r,i\gamma}.
  \end{displaymath}
  Hence the sequence $\{x_s\}$ defined in (\ref{eq:M_xs_def}) fulfills
  inequalities (\ref{eq:x_leq_Phix}).
  
  Since $\|V_1\|_{r,i\gamma}\leq\|\bar{Z}\|_{r,i\gamma}$ assumption
  (\ref{normbarZ}) implies $x_1\leq1/2$. We know from the discussion
  at the end of Subsection~\ref{sec:auxiliary} that in that case the
  series $\sum{}x_s$ is convergent. It follows that
  $\Vert{}V_s\Vert_{r,i\gamma}\to0$ and, using the estimate
  \begin{displaymath}
    \Vert W_s \Vert\le \Vert W_s \Vert_{SH}
    \le \left(1+2\zeta(r+1)\right)\Vert W_s\Vert_{r+1,(i-1)\gamma}
  \end{displaymath}
  and (\ref{eq:Ws_leq_xs}), also that $U_s$ converges to a unitary
  operator $U$ in $\Bs(\Hs)$. Furthermore, from (\ref{Ha_s}) and
  (\ref{eq:UsGsVs_def}) one deduces that
  \begin{displaymath}
    G_{s+1}-G_s = \diag\!\left(\Phi(\ad_{W_s})V_s\right).
  \end{displaymath}
  Since $G_s$ is diagonal and $i\geq1$ we have
  \begin{displaymath}
    \|G_{s+1}-G_s\|_{\infty,\gamma} = \|G_{s+1}-G_s\|_{r,\gamma}
    \leq \|G_{s+1}-G_s\|_{r,i\gamma}
    \leq \|\Phi(\ad_{W_s})V_s\|_{r,i\gamma}.
  \end{displaymath}
  Using once more (\ref{eq:Ws_leq_xs}) and (\ref{kolotoc}) one finds
  that
  \begin{eqnarray}
    \label{eq:Gsdiff_estim}
    \|G_{s+1}-G_s\| = \|G_{s+1}-G_s\|_{\infty,0}
    \leq \|G_{s+1}-G_s\|_{\infty,\gamma} \leq M \Phi(x_s)x_s.
  \end{eqnarray}
  From here one concludes that $\{G_s\}$ is a Cauchy sequence both in
  $\YY(\infty,\gamma)$ and $\Bs(\Hs)$. Hence $G_s$ converges to a
  diagonal operator $A$ which lies in $\YY(\infty,\gamma)$.
  
  We must verify that condition (\ref{G}) is actually fulfilled.
  Observe from (\ref{norm_const}) that $C_p>2^3\cdot3$ if $p>2$. By
  the assumptions,
  \begin{displaymath}
    \Vert G_1\Vert_{\infty,\gamma} \le \Vert Y\Vert_{\infty,\gamma}
    +\Vert\bar{Z}\Vert_{r, i\gamma} < \frac{c_H}{12}.
  \end{displaymath}
  Furthermore, from (\ref{eq:Gsdiff_estim}) it follows that
  \begin{equation}
    \label{eq:Gs_estim}
    \|G_{s+1}\|_{\infty,\gamma}
    \leq \|G_1\|_{\infty,\gamma}
    +\sum_{j=1}^s\|G_{s+1}-G_s\|_{\infty,\gamma}
    \leq \frac{c_H}{12}+M\sum_{j=1}^\infty x_j\Phi(x_j).
  \end{equation}
  Recalling that $x_1\leq1/2$ one gets
  \begin{equation}
    \label{eq:sum_xPhi_estim}
    M\sum_{j=1}^\infty x_j\Phi(x_j) \leq \frac{Mx_1^{\,2}}{1-x_1}
    \leq Mx_1 \leq \|\bar Z\|_{r,i\gamma} < \frac{c_H}{12}.
  \end{equation}
  The last inequality is again a consequence of assumption
  (\ref{normbarZ}). One concludes that condition (\ref{G}) is
  fulfilled for all $s$.
  
  Since all operators occurring in (\ref{Hs_plus_1}) except of $H$ are
  bounded one deduces from this equality that $U_s$ preserves the
  domain of $H$ for all $s$. Since $H$ is a closed operator the limit
  in equality (\ref{Hs_plus_1}), as $s\to\infty$, can be carried out
  and results in equality (\ref{eq:UinftyH}).

  From the computations in (\ref{eq:Gs_estim}),
  (\ref{eq:sum_xPhi_estim}) it also follows that
  \begin{displaymath}
    \|G_{s+1}\|_{\infty,\gamma} \leq \|G_1\|_{\infty,\gamma}+Mx_1
    = \|G_1\|_{\infty,\gamma}+\|V_1\|_{r,i\gamma}
    \leq \|Y\|_{\infty,\gamma}+2\|\bar Z\|_{r,i\gamma}.
  \end{displaymath}
  Sending $s$ to infinity one verifies the estimate
  (\ref{eq:normA_estim}). Furthermore, estimate (\ref{eq:Ws_leq_xs})
  implies
  \begin{displaymath}
    \sum_{s=1}^\infty\Vert W_s\Vert_{r+1,(i-1)\gamma}
    \leq \frac{1}{2C_{r+1}}\sum_{s=1}^\infty x_s
    \leq \frac{x_1}{2C_{r+1}(1-x_1)} \leq \frac{1}{2C_{r+1}}.
  \end{displaymath}
  From Corollary~\ref{lemma_XY} we deduce that the operator
  $\ad_{W_s}$ is well defined on the Banach space
  $\YY(r-1,(i+1)\gamma)$, with a norm bounded from above by
  $4C_r\|W_s\|_{r+1,(i-1)\gamma}$. Thus for $X\in\YY(r-1,(i+1)\gamma)$
  one can estimate
  \begin{eqnarray*}
    \Vert UXU^*\Vert_{r-1,(i+1)\gamma} &=& \lim_{s\to\infty}
    \|e^{W_s}e^{W_{s-1}}\cdots
    e^{W_1}Xe^{-W_1}\cdots e^{-W_{s-1}}e^{-W_s}\|_{r-1,(i+1)\gamma} \\
    &\leq& \exp\!\left(4C_r\sum_{s=1}^\infty
      \Vert W_s\Vert_{r+1,(i-1)\gamma}
    \right)\Vert X\Vert_{r-1,(i+1)\gamma} \\
    &\leq& \exp\!\left(2\,\frac{C_r}{C_{r+1}}\right)
    \Vert X\Vert_{r-1,(i+1)\gamma}.
  \end{eqnarray*}
  This shows (\ref{eq:UXU_estim}). The proof is complete.
\end{proof}

\section{Proof of Theorem~\ref{thm_aa}}

As already announced, the proof of Theorem~\ref{thm_aa} is based on a
combination of the anti-adiabatic transform (Proposition~\ref{a-a})
and the progressive diagonalization method
(Corollary~\ref{thm:Kheart}). Let us formulate it as a corollary.

\begin{corollary}
  \label{thm:combination}
  Let $r>2$, $i\ge1$, $\gamma\in\,]0,\frac12[\,$, and $H$ be a
  self-adjoint operator with a pure point spectrum and the spectral
  decomposition $H=\sum_{n}E_nP_n$. Assume that the eigen-values
  $\{E_n\}_{n=1}^\infty$ are ordered increasingly and satisfy
  (\ref{VGC}). Further assume that $Y\in\YY(\infty,\gamma)$ is
  Hermitian and commutes with $H$, and $Z(t)\in\YY(r,i\gamma)$ is
  Hermitian, $T$-periodic and $C^1$ in the strong sense. If
  \begin{displaymath}
    \|Y\|_{\infty,\gamma}+\Vert Z\Vert_{r,i\gamma}
    \le \frac{c_H}{4\pi\,C_{r+1}}
  \end{displaymath}
  then there exists a family $\UU(t)$ of unitary operators on $\Hs$
  which is $T$-periodic and $C^1$ in the strong sense and such that
  \begin{displaymath}
    \UU(t)\left(D+H+Y+Z(t)\right)\UU(t)^* = D+H+A+B(t)
  \end{displaymath}
  where $A\in\YY(\infty,\gamma)$ is Hermitian, commutes with $H$ and
  fulfills
  \begin{displaymath}
    \Vert{}A\Vert_{\infty,\gamma}
    \leq 2\left(\Vert Y\Vert_{\infty,\gamma}
      +\Vert Z\Vert_{r,i\gamma}\right),
  \end{displaymath}
  and $B(t)\in\YY(r-1,(i+1)\gamma)$ is $T$-periodic, Hermitian,
  continuously differentiable in the strong sense and satisfies
  \begin{eqnarray*}
    \Vert B\Vert_{r-1,(i+1)\gamma}
    &\leq& \frac{1}{2C_r}\exp\!\left(2\,\frac{C_r}{C_{r+1}}\right) \\
    && \textrm{~}\times\,
    \big(\exp\!\left(4C_rT\,\Vert Z\Vert_{r,i\gamma}\right)-1\big)
    \left(C_H+4\Vert Y\Vert_{\infty,\gamma}
      +2C_r\Vert Z\Vert_{r,i\gamma}\right).
  \end{eqnarray*}
\end{corollary}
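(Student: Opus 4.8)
The assertion is obtained by composing the two transforms already established: the anti-adiabatic transform of Proposition~\ref{a-a} followed by the progressive diagonalization of Corollary~\ref{thm:Kheart}. The plan is purely one of assembly, so I shall indicate the order and the points needing care rather than reproduce the estimates.

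First I would apply Proposition~\ref{a-a} to the Floquet Hamiltonian $K_\triangle:=D+H+Y+Z(t)$. Its hypotheses ($r>2$, $i\ge1$, $\gamma\in\,]0,\frac12[\,$, $H$ obeying the upper bound in (\ref{VGC}), $Y\in\YY(\infty,\gamma)$ Hermitian and commuting with $H$, and $Z(t)\in\YY(r,i\gamma)$ Hermitian, $T$-periodic and $C^1$ in the strong sense) are exactly the ones assumed here. Hence there is a $T$-periodic family $e^{\im F(t)}$ of unitaries, $C^1$ in the strong sense, with $F(t)=\int_0^t\tilde Z(s)\,\dd s$ being $T$-periodic precisely because $\bar Z$ is the time average of $Z(t)$, so that $\int_0^T\tilde Z(s)\,\dd s=0$. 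This yields
\begin{displaymath}
  K_\diamondsuit := e^{\im F(t)}K_\triangle e^{-\im F(t)}
  = D+H+Y+\bar Z+Z_\diamondsuit(t),
\end{displaymath}
where $\bar Z$ is Hermitian and $Z_\diamondsuit(t)\in\YY(r-1,(i+1)\gamma)$ is $T$-periodic, Hermitian, $C^1$ in the strong sense, and satisfies the norm bound (\ref{Z_diamond_estim}).

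Next I would apply Corollary~\ref{thm:Kheart} to $K_\diamondsuit$. Here one must check that $H$, $Y$ and $\bar Z$ satisfy the hypotheses of Proposition~\ref{p-d}: $\bar Z$ is Hermitian and lies in $\YY(r,i\gamma)$ with $\|\bar Z\|_{r,i\gamma}\le\|Z\|_{r,i\gamma}$ (a time average does not increase the norm $\|\cdot\|_{r,i\gamma}$), so that the smallness condition (\ref{normbarZ}) follows from the assumed bound $\|Y\|_{\infty,\gamma}+\|Z\|_{r,i\gamma}\le c_H/(4\pi C_{r+1})$. Corollary~\ref{thm:Kheart} then produces a time-independent unitary $U$ with $K_\heartsuit:=UK_\diamondsuit U^*=D+H+A+B(t)$, where $A\in\YY(\infty,\gamma)$ is Hermitian, commutes with $H$, and obeys $\|A\|_{\infty,\gamma}\le 2(\|Y\|_{\infty,\gamma}+\|\bar Z\|_{r,i\gamma})\le 2(\|Y\|_{\infty,\gamma}+\|Z\|_{r,i\gamma})$, while $B(t):=UZ_\diamondsuit(t)U^*\in\YY(r-1,(i+1)\gamma)$ is $T$-periodic, Hermitian and $C^1$ in the strong sense (conjugation by a fixed unitary preserves all of these), with $\|B\|_{r-1,(i+1)\gamma}\le\exp(2C_r/C_{r+1})\,\|Z_\diamondsuit\|_{r-1,(i+1)\gamma}$. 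Substituting (\ref{Z_diamond_estim}) gives exactly the stated estimate on $\|B\|_{r-1,(i+1)\gamma}$.

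Finally I would set $\UU(t):=Ue^{\im F(t)}$. As the product of a fixed unitary with a $T$-periodic family of unitaries that is $C^1$ in the strong sense, $\UU(t)$ inherits all three properties, and by construction $\UU(t)(D+H+Y+Z(t))\UU(t)^*=UK_\diamondsuit U^*=K_\heartsuit=D+H+A+B(t)$, which is the claim. There is no genuine obstacle in this proof; the only delicate points are the $T$-periodicity of $F(t)$ — which rests on $\bar Z$ being exactly the time average of $Z(t)$, so that the term added by the anti-adiabatic step is stationary and can subsequently be absorbed by the diagonalization — and the propagation of the smallness hypothesis from $Z$ to $\bar Z$ via $\|\bar Z\|_{r,i\gamma}\le\|Z\|_{r,i\gamma}$.
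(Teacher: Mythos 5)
Your proof is correct and follows exactly the approach the paper takes: first apply Proposition~\ref{a-a}, then check the hypothesis of Corollary~\ref{thm:Kheart} via $\|\bar Z\|_{r,i\gamma}\le\|Z\|_{r,i\gamma}$, and finally set $\UU(t)=U e^{\im F(t)}$. The paper's own proof is a one-line remark with precisely this composition and this averaging estimate, so nothing is missing or different.
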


To prove Corollary~\ref{thm:combination} it suffices to set
$\UU(t)=U\exp(\im{}F(t))$ where $F(t)$ comes from
Proposition~\ref{a-a} and $U$ comes from Corollary~\ref{thm:Kheart}.
Apart of this one applies the following elementary estimate: if the
norm $\|X\|_{p,\delta}$ of a $T$-periodic family $X(t)$ formed by
bounded operators is finite for some $p>1$ and $\delta\geq0$ then the
time average $\bar{X}$ of $X(t)$ over the period $T$ fulfills
$\|\bar{X}\|_{p,\delta}\leq\|X\|_{p,\delta}$.

Equipped with Corollary~\ref{thm:combination} we are ready to approach
the proof of Theorem~\ref{thm_aa}.

\begin{proof}[Proof of Theorem~\ref{thm_aa}]
  One starts from the Floquet Hamiltonian $K=D+H+V(t)$ and applies to
  it $q$ times Corollary~\ref{thm:combination}, with the steps being
  enumerated by $i=1,2,\ldots,q$. In the $i$th step one assumes that a
  strongly continuous function $J_{i-1}(t)$ with values in unitary
  operators on $\Hs$ has been already constructed so that
  \begin{displaymath}
    K = J_{i-1}(t)\left(D+H+A_{i-1}+B_{i-1}(t)\right)J_{i-1}(t)^*,
  \end{displaymath}
  with $A_{i-1}\in\YY(\infty,\gamma)$ being Hermitian and commuting
  with $H$, and
  \hspace{1em}
  $B_{i-1}(t)\in\linebreak
  \YY(p-i+1,i\gamma)$
  being symmetric, $T$-periodic and $C^1$ in the strong sense. In the
  first step one sets $A_0:=0$, $B_0(t):=V(t)$ and $J_0(t):=1$.
  
  Corollary~\ref{thm:combination} can be applied to the Floquet
  Hamiltonian $K_{i-1}:=D+H+A_{i-1}+B_{i-1}(t)$, with $r=p-i+1$,
  provided there is satisfied the assumption
  \begin{equation}
    \label{A_i}
    \Vert A_{i-1}\Vert_{\infty,\gamma}
    +\Vert B_{i-1}\Vert_{p-i+1,i\gamma}
    \leq \frac{c_{H}}{4\pi C_{p-i+2}}.
  \end{equation}
  Recall that the constant $C_p$ is given by (\ref{norm_const}). Under
  this assumption, there exists a strongly differentiable family of
  unitary operators $\UU_i(t)$ such that
  \begin{displaymath}
    K_{i} := D+H+A_{i}+B_{i}(t)
    = \UU_{i}(t) K_{i-1}\,\UU_{i}(t)^*
  \end{displaymath}
  where $A_{i}\in\YY(\infty,\gamma)$ is symmetric and diagonal, and
  $B_{i}(t)\in\YY(p-i,(i+1)\gamma)$ is $T$-periodic, symmetric and
  strongly $C^1$. Moreover,
  \begin{equation}
    \label{novy_diag}
    \Vert A_{i}\Vert_{\infty,\gamma}
    \leq 2\left(\Vert A_{i-1}\Vert _{\infty,\gamma}
      +\Vert B_{i-1}\Vert_{p-i+1,i\gamma}\right)
  \end{equation}
  and
  \begin{eqnarray}
    \nonumber
    \Vert B_i\Vert_{p-i,(i+1)\gamma}
    &\leq& \frac{1}{2C_{p-i+1}}
    \exp\!\left(2\,\frac{C_{p-i+1}}{C_{p-i+2}}\right)
    \big(\exp\!\left(4C_{p-i+1}T\,
      \Vert B_{i-1}\Vert_{p-i+1,i\gamma}\right)-1\big) \\
    \label{norm A_i}
    && \textrm{~}\times\,
    \left(C_H+4\Vert A_{i-1}\Vert_{\infty,\gamma}
      +2C_{p-i+1}\Vert B_{i-1}\Vert_{p-i+1,i\gamma}\right).
  \end{eqnarray}
  Finally, $J_{i}(t):=J_{i-1}(t)\UU_i(t)^*$ is a family of unitary
  operators which is continuously differentiable in the strong sense
  and such that
  \begin{displaymath}
    K = J_{i}(t)\left(D+H+A_{i}+B_{i}(t)\right)J_{i}(t)^*.
  \end{displaymath}

  To finish the proof we have to choose $\varepsilon>0$ sufficiently
  small so that if $\|V\|_{p,\gamma}<\varepsilon$ then condition
  (\ref{A_i}) is satisfied in each step $i=1,2,\ldots,q$.

  From (\ref{novy_diag}) one derives by induction
  \begin{displaymath}
    \Vert A_{i} \Vert_{\infty,\gamma}
    \leq \sum_{j=0}^{i-1} 2^{i-j}\Vert B_j\Vert_{p-j,(j+1)\gamma}.
  \end{displaymath}
  From here we deduce that inequalities (\ref{A_i}) are satisfied for
  $i=1,2,\ldots,k$, provided the inequalities
  \begin{equation}
    \label{B_i}
    \sum_{j=0}^{i-1} 2^{i-1-j} \Vert B_j\Vert_{p-j,(j+1)\gamma}
    \leq \frac{c_{H}}{4\pi C_{p-i+2}}
  \end{equation}
  are satisfied for the same range of indices. Furthermore, relations
  (\ref{A_i}) and (\ref{norm A_i}) imply that
  \begin{equation}
    \label{eq:B_leq_phi}
    \|B_i\|_{p-i,(i+1)\gamma} \leq \phi_i(\|B_{i-1}\|_{p-i+1,i\gamma})
  \end{equation}
  where
  \begin{displaymath}
    \phi_i(y) :=
    \frac{\exp\!\left(2\,\frac{C_{p-i+1}}{C_{p-i+2}}\right)}{2C_{p-i+1}}
    \big(\exp(4C_{p-i+1}T\,y)-1\big)
    \left(C_H+\frac{c_{H}}{\pi C_{p-i+2}}+(2C_{p-i+1}-4)y\right).
  \end{displaymath}
  Set
  \begin{displaymath}
    F_i(y) := 2^{i-1}y+\sum_{j=1}^{i-1} 2^{i-1-j}
    \phi_{j}\circ\phi_{j-1}\circ\cdots\circ\phi_1(y),\textrm{~~}
    i=1,2,\ldots,q.
  \end{displaymath}
  It follows from (\ref{eq:B_leq_phi}) that inequalities (\ref{B_i})
  are satisfied for $i=1,2,\ldots,k$, if it holds
  \begin{displaymath}
    F_i(\|B_0\|_{p,\gamma}) \leq \frac{c_{H}}{4\pi C_{p-i+2}}
  \end{displaymath}
  for the same range of indices.

  Recall that $B_0(t)=V(t)$. From this discussion it is clear that
  condition (\ref{A_i}) is satisfied in all steps $i=1,2,\ldots,q$,
  provided $\|V\|_{p,\gamma}\leq\varepsilon$ and $\varepsilon>0$ is
  chosen so that
  \begin{displaymath}
    \forall i\in\{1,2,\ldots,q\},\forall y\in\,[0,\varepsilon\,],
    \textrm{~}F_i(y) \leq \frac{c_{H}}{4\pi C_{p-i+2}}\,.
  \end{displaymath}
  But all functions $\phi_i(y)$ are continuous, strictly increasing
  and satisfy $\phi_i(0)=0$. Consequently, the same is true for all
  functions $F_i(y)$. Hence the following choice of $\varepsilon$ will
  do:
  \begin{displaymath}
    \varepsilon = \min\left\{
      F_i^{\,-1}\!\left(\frac{c_{H}}{4\pi C_{p-i+2}}\right);
    \textrm{~}1\leq i\leq q\right\}.
  \end{displaymath}
  This completes the
  proof of Theorem~\ref{thm_aa}.
\end{proof}

\section*{Acknowledgments}
The authors wish to acknowledge gratefully partial support from the
following grants: grant No.~201/05/0857 of the Grant Agency of the
Czech Republic (P.~\v{S}.), grant No.~MSM 6840770010 of the Ministry
of Education of the Czech Republic (O.~L.), and grant No.~LC06002 of
the Ministry of Education of the Czech Republic (O.~L. and P.~\v{S}.).


\begin{thebibliography}{10} 
  
\bibitem{ADE} Asch J., Duclos P., Exner P., \emph{Stability of driven
    systems with growing gaps, quantum rings, and Wannier ladders},
  J.~Stat. Phys. \textbf{92} (1998) 1053-1070.

\bibitem{ABCF} Astaburuaga M.~A., Bourget O., Cortés V.~H., Fernández
  C., \emph{Floquet operators without singular continuous spectrum}
  J.~Funct. Anal. \textbf{238} (2006) 489-517.

\bibitem{BJ} Barbaroux J.~M., Joye A., \emph{Expectation values of
    observables in time-dependent quantum mechanics}, J.~Stat. Phys.
  \textbf{90} (1998) 1225-1249.

\bibitem{BJLPN} Bunimovich L., Jauslin H.~R., Lebowitz J.~L.,
  Pellegrinotti A., Nielaba P., \emph{Diffusive energy growth in
    classical and quantum driven oscillators}, J.~Stat. Phys.
  \textbf{62} (1991) 793-817.

\bibitem{DeBF} De~Bi\`evre S., Forni G., \emph{Transport properties of
    kicked and quasiperiodic Hamiltonians}, J.~Stat. Phys. \textbf{90}
  (1998) 1201-1223.

\bibitem{BR} Bhatia R., Rosenthal P., \emph{How and why to solve the
    operator equation $A X- X B=Y$}, Bull. London Math. Soc.
  \textbf{29} (1997) 1-21.

\bibitem{Bou} Bourget O., \emph{Singular continuous Floquet operators
    for systems with increasing gaps}, J.~Math. Anal. Appl.
  \textbf{276} (2002) 28-39.

\bibitem{Bou2} Bourget O., \emph{Singular continuous Floquet operator
    for periodic quantum systems}, J.~Math. Anal. Appl. \textbf{301}
  (2005) 65-83.

\bibitem{Cb} Combes J.-M., \emph{Connection between quantum dynamics
    and spectral properties of time-evolution operators}, in
  ``Differential Equations and Applications to Mathematical physics'',
  W.~F. Ames, E.~M. Harrell and J.~V. Herod eds. (Academic Press,
  Boston, 1993) pp. 59-68.

\bibitem{C1} Combescure M., \emph{The quantum stability problem for
    time-periodic perturbations of the harmonic oscillator}, Ann.
  Inst. Henri Poincar\'e \textbf{47} (1987) 62-82, Erratum: Ann.
  Inst. Henri Poincar\'e 47 (1987) 451-454.

\bibitem{C2} Combescure M., \emph{Spectral properties of a
    periodically kicked quantum Hamiltonian}, J.~Stat. Phys.
  \textbf{59} (1990) 679-690.

\bibitem{C3} Combescure M., \emph{Recurrent versus diffusive dynamics
    for a kicked quantum oscillator}, Ann. Inst.  H.~Poincar\'e
  \textbf{57} (1992) 67-87.

\bibitem{DLSV} Duclos~P., Lev O., \v{S}\v{t}ov\'\i\v{c}ek~P.,
  Vittot~M.: \emph{Progressive diagonalization and applications}, in
  "Operator Algebras and Mathematical Physics", J.-M. Combes et al.
  eds. (The Theta Foundation, Bucharest, 2003) pp. 75-88.

\bibitem{DSSV} Duclos~P., Soccorsi~E., \v{S}\v{t}ov\'\i\v{c}ek~P.,
  Vittot~M.: \emph{Dynamical localization in periodically driven
    quantum systems}, in "Advances in Operator Algebras and
  Mathematical Physics", F.-P.~Boca, O.~Bratteli, R.~Longo and
  H.~Siedentop eds. (The Theta Foundation, Bucharest, 2005) pp.
  57-66.

\bibitem{EV} Enss V., Veseli\'c K., \emph{Bound states and propagating
    states for time-dependent Hamiltonians}, Ann. Inst.  H.~Poincar\'e
  \textbf{39} (1983) 159-191.

\bibitem{GM} Guarneri I., Mantica G., \emph{On the asymptotic
    properties of quantum dynamics in the presence of a fractal
    spectrum}, Ann. Inst. H.~Poincar\'e~A \textbf{61} (1994) 369-379.

\bibitem{HLS} Hagedorn G.~A., Loss M., Slawny J.,
  \emph{Non-stochasticity of time-dependent quadratic Hamiltonians and
    the spectra of canonical transformations}, J.~Phys.~A: Math. Gen.
  \textbf{19} (1986) 521-531.

\bibitem{H1} Howland J.~S., \emph{Floquet operators with singular
    spectrum, I}, Ann. Inst. H.~Poincar\'e, \textbf{50} (1989)
  309-323.

\bibitem{H2} Howland J.~S., \emph{Floquet operators with singular
    spectrum, II}, Ann. Inst. H.~Poincar\'e, \textbf{50} (1989)
  325-334.

\bibitem{H3} Howland J.~S., \emph{Floquet operators with singular
    spectrum, III}, Ann. Inst. H.~Poincar\'e, \textbf{69} (1998)
  265-273.

\bibitem{JL} Jauslin H., Lebowitz J.~L., \emph{Spectral and stability
    aspects of quantum chaos }, Chaos, \textbf{1} (1991) 114-121.

\bibitem{J1} Joye A., \emph{Absence of absolutely continuous spectrum
    of Floquet operators}, J.~Stat. Phys. \textbf{75} (1994) 929-952.

\bibitem{J2} Joye A., \emph{Upper bounds for the energy expectation in
    the time-dependent quantum mechanics}, J.~Stat. Phys. \textbf{85}
  (1996) 575-606.

\bibitem{K} Krein S.~G., \emph{Linear Differential Equations in Banach
    Spaces} (AMS, Providence, Rhode Island, 1971)

\bibitem{McCMcK} McCaw J., McKellar B., \emph{On the continuous
    spectral component of the Floquet operator for a periodically
    kicked quantum system} J.~Math. Phys. \textbf{46} (2005) 103503.

\bibitem{N1} Nenciu G., \emph{Floquet operators without absolutely
    continuous spectrum}, Ann. Inst. H.~Poincar\'e~A \textbf{59}
  (1993) 91-97.

\bibitem{N2} Nenciu G., \emph{Adiabatic theory: Stability of systems
    with increasing gaps}, Ann. Inst. H.~Poincar\'e~A \textbf{67}
  (1997) 411-424.

\bibitem{dO} de~Oliveira C.~R., \emph{Some remarks concerning
    stability for nonstationary quantum systems}, J.~Stat. Phys.
  \textbf{78} (1995) 1055-1065.

\bibitem{dOS} de~Oliveira C.~R., Simsen M.~S., \emph{A Floquet
    operator with pure point spectrum and energy instability}, Ann.
  Inst. H.~Poincar\'e, to appear.

\bibitem{RS} Reed M., Simon B., \emph{Methods of Modern Mathematical
    Physics III}, Academic Press, San Diego, 1979.



\end{thebibliography}
\end{document}